\title{Query-Efficient Locally Decodable Codes of Subexponential Length}
\author{
    Yeow Meng Chee \\
    Division of Mathematical Sciences \\
    School of Physical \& Mathematical \\ Sciences \\
    Nanyang Technological University \\
    Singapore 637371 \\
    \email{ymchee@ntu.edu.sg} \\
    \and
    Tao Feng \\
    Department of Mathematical Sciences \\
    University of Delaware \\
    Newark, DE 19716, USA \\
    \email{feng@math.udel.edu} \\
    \and
    San Ling \\
    Division of Mathematical Sciences \\
    School of Physical \& Mathematical \\ Sciences \\
    Nanyang Technological University \\
    Singapore 637371 \\
    \email{lingsan@ntu.edu.sg} \\
    \and
    Huaxiong Wang \\
    Division of Mathematical Sciences \\
    School of Physical \& Mathematical \\ Sciences \\
    Nanyang Technological University \\
    Singapore 637371 \\
    \email{hxwang@ntu.edu.sg} \\
    \and
    Liang Feng Zhang \\
    Division of Mathematical Sciences \\
    School of Physical \& Mathematical \\ Sciences \\
    Nanyang Technological University \\
    Singapore 637371 \\
    \email{liangf.zhang@gmail.com}
}
\begin{abstract}
A $k$-query locally decodable code (LDC)
$\textbf{C}:\Sigma^{n}\rightarrow \Gamma^{N}$ encodes each message $x$ into
a codeword $\textbf{C}(x)$ such that each symbol of $x$ can be probabilistically
recovered by querying only $k$ coordinates of $\textbf{C}(x)$, even after a
constant fraction of the coordinates have been corrupted.
Yekhanin (2008)
constructed a $3$-query LDC of subexponential length,
$N=\exp(\exp (O(\log n/\log\log n)))$, under the assumption that there are
infinitely many Mersenne primes. Efremenko (2009) constructed a $3$-query LDC
of length $N_{2}=\exp(\exp (O(\sqrt{\log n\log\log n})))$ with no assumption, and a
$2^r$-query LDC  of length $N_{r}=\exp(\exp(O(\sqrt[r]{\log n(\log \log n)^{r-1}})))$,
for every integer $r\geq 2$. Itoh and Suzuki (2010) gave a composition method in
Efremenko's framework and constructed a $3 \cdot 2^{r-2}$-query LDC of length
$N_{r}$, for every integer $r\geq 4$, which improved the query complexity of
Efremenko's LDC of the same length by a factor of $3/4$.
The main ingredient of
Efremenko's construction is the Grolmusz construction for super-polynomial
size set-systems with restricted intersections, over $\mathbb{Z}_m$, where
$m$ possesses a certain ``good'' algebraic property (related to the
``algebraic niceness'' property of Yekhanin (2008)).
Efremenko constructed a 3-query LDC  based on $m=511$
and  left as an open problem to find other
numbers that offer the same property for LDC constructions. 

In this paper, we develop the algebraic theory behind the constructions of
Yekhanin (2008) and Efremenko (2009), in an attempt to understand
the ``algebraic niceness'' phenomenon in $\mathbb{Z}_m$.
We show that every integer
$m = pq = 2^t -1$, where $p$, $q$ and $t$ are prime, possesses the same
good algebraic property as $m=511$ that allows savings in query complexity.
We identify 50 numbers of this form by
computer search, which together with 511, are then applied to gain 
improvements on query complexity via Itoh and Suzuki's composition method.
More precisely,
we  construct a $3^{\lceil r/2\rceil}$-query LDC for every positive integer
$r<104$ and a $\left\lfloor (3/4)^{51}\cdot 2^{r}\right\rfloor$-query LDC
for every integer $r\geq 104$,
both of length $N_{r}$, improving the $2^r$ queries used by Efremenko (2009) and
$3\cdot 2^{r-2}$ queries used by Itoh and Suzuki (2010). 

We also obtain new efficient private information retrieval (PIR)
schemes from the new query-efficient LDCs.
\end{abstract}
\begin{document}

\section{Introduction} \label{Sec:introduction}

A classical error-correcting code $\textbf{C}:\Sigma^{n}\rightarrow
\Gamma^{N}$ allows one to encode a message $x$
 into a codeword $\textbf{C}(x)$ such that $x$
can be   recovered even if $\textbf{C}(x)$ gets corrupted in a
number of coordinates. However, to recover even a small portion of
the message $x$, one has to consider all or most of the coordinates
of the received (possibly corrupted) codeword.
\citet{KatzTrevisan:2000}
considered error-correcting codes where each symbol of the message
can be probabilistically recovered by looking at a limited
number of coordinates of a corrupted encoding.
Such codes are known as {\it locally decodable codes}
(LDCs). Informally, a $(k,\delta,\epsilon)$-LDC $\textbf{C}:\Sigma^{n}\rightarrow
\Gamma^{N}$ encodes a message $x$ into a codeword
$\textbf{C}(x)$ such that each symbol $x_{i}$ of the
message can be recovered with probability at least $1-\epsilon$, by a probabilistic
decoding algorithm that makes at most $k$ queries,
 even if the codeword is corrupted in up to $\delta N$ locations.
LDCs have many applications in cryptography and complexity theory
(see, for example, \citet{Gasarch:2004,Trevisan:2004}),
and have attracted a considerable amount of attention
\citep{Deshpandeetal:2002,Obata:2002,KerenidisWolf:2004,DvirShpilka:2005,WehnerWolf:2005,Goldreichetal:2006,ShiowattanaLokam:2006, Raghavendra:2007,Woodruff:2007, KedlayaYekhanin:2008,Yekhanin:2008,Efremenko:2009,Gopalan:2009,ItohSuzuki:2010} since their formal introduction by
\citet{KatzTrevisan:2000}.

For constant $\delta$ and $\epsilon$, the efficiency of a
$(k,\delta,\epsilon)$-LDC $\textbf{C}:\Sigma^{n}\rightarrow
\Gamma^{N}$ is measured by its \emph{length} $N$ and
\emph{query complexity} $k$. Ideally, we want both $N$ and $k$ to be
as small as possible.
\citet{KatzTrevisan:2000} proved that there do not exist
families of 1-query LDCs.  \citet{Goldreichetal:2006} obtained an
exponential lower bound of $\exp(\Omega(n))$ on the length of 2-query \emph{linear}
LDCs. \citet{KerenidisWolf:2004} showed that the optimal length of \emph{any} 2-query LDCs
is $\exp(O(n))$  via a quantum argument.
For a $k$-query ($k\geq 3$) LDC,
\citet{Woodruff:2007} obtained  a superlinear lower bound of
$\Omega(n^{(k+1)/(k-1)}/\log n)$ on its length. Other lower bounds have been obtained by
\citet{Deshpandeetal:2002}, \citet{Obata:2002}, \citet{DvirShpilka:2005}, \citet{WehnerWolf:2005},  and \citet{ShiowattanaLokam:2006}.

It has been conjectured for a long time that the length $N$ of any
constant-query LDC should have an exponential dependence on its
message length $n$. This conjecture was disproved by
\citet{Yekhanin:2008}, who constructed a 3-query LDC of length $\exp(\exp
(O(\log n/\log\log n)))$ under the assumption that there are
infinitely many {\em Mersenne primes} (primes of the form
$M_{t}=2^{t}-1$, where $t$ is prime). Subsequently, Yekhanin's
construction was nicely reformulated by \citet{Raghavendra:2007}
using group homomorphism. Inspired by this,
\citet{Efremenko:2009} generalized Yekhanin's construction and established a
framework for constructing LDCs in which the above assumption on
Mersenne primes is no longer necessary. \citet{Efremenko:2009} constructed  a
$k_{r}$-query ($k_{r}\leq 2^{r}$) LDC of length
$N_{r}=\exp(\exp(O(\sqrt[r]{\log n(\log \log n)^{r-1}})))$ for every
integer $r\geq 2$, and in particular, a 3-query ($k_{2}=3$) LDC of length
 $N_{2}=\exp(\exp (O(\sqrt{\log n\log\log n})))$ for $r=2$.
The main ingredient of Efremenko's construction is a construction of \citet{Grolmusz:2000}
for super-polynomial size set-systems with restricted intersections. Each of these set-systems
is over a
certain composite number, which has significant impact on the query
complexity (the value of
$k_{r}$) of the resulting LDC.
\citet{Efremenko:2009} showed that the composite number  511
can result in a 3-query LDC of length $N_{2}$ and left as an open problem to find
other suitable composite numbers.

Recently, \citet{ItohSuzuki:2010} developed a composition method
in Efremenko's framework. This method allows one to
compose, in an appropriate way, Efremenko's $k_{r}$-query
($k_{r}\leq 2^{r}$) LDC of length $N_{r}$ and $k_{l}$-query
($k_{l}\leq 2^{l}$) LDC of length $N_{l}$ to obtain a $k$-query LDC
 of length
$N_{r+l}$ such that  $k\leq k_{r}k_{l}$. For every integer $r\geq 4$, taking
Efremenko's 3-query LDC
 and  $k_{r-2}$-query LDC
 as
building blocks, the composition method yields a $k$-query LDC
 of length $N_{r}$ in which  $k\leq 3\cdot 2^{r-2}$,
improving the query
complexity of Efremenko's LDC of the same length by a factor of $3/4$.
We stress that this improvement
is due to the first
building block, that is, the 3-query LDC. Hence, it is of great
interest to obtain as many such 3-query LDCs as possible, or equivalently,
as many new composite numbers as possible which can result in
3-query LDCs of length $N_{2}$ in Efremenko's construction.

\subsection{Our Results}
In this paper we study the algebraic properties of {\em good}
composite numbers which yield 3-query LDCs in
Efremenko's construction. We give a  characterization of such
composite numbers and show that every Mersenne number which is a
product of two primes is good. Consequently, we obtain a number of
good composite numbers. These new good numbers,
together with 511, are then applied to achieve improvements on
the query complexity in Efremenko's framework.

Let $\mathbb{M}_{2}$ be the set of composite numbers, each of which
is the product of two distinct odd primes and good (i.e., can yield a 3-query
LDC of length $N_{2}$ in Efremenko's construction).
We characterize numbers in
$\mathbb{M}_{2}$, and show that the subset of
{\em Mersenne numbers} (numbers of the form $M_{t}=2^{t}-1$, where $t$ is prime)
\begin{equation*}
\mathbb{M}_{2,{\rm Mersenne}} = \{m : \text{$m=2^t-1=pq$, where
$p$, $q$ and $t$ are primes} \}
\end{equation*}
is contained in $\mathbb{M}_{2}$. Note that the number $511 = 2^9 -1
= 7 \times 73$, suggested by \citet{Efremenko:2009}, is in
$\mathbb{M}_{2}$ but not in $\mathbb{M}_{2,\rm Mersenne}$. On the
other hand, the number $15 = 3\times 5$, the smallest possible
candidate for $\mathbb{M}_{2}$, is not in $\mathbb{M}_{2}$, checked
via exhaustive search by \citet{ItohSuzuki:2010}.
We identify 50 numbers in $\mathbb{M}_{2,{\rm Mersenne}}$ and hence
50 new
 numbers in $\mathbb{M}_2$, which answers open problems
raised by \citet{Efremenko:2009} and \citet{ItohSuzuki:2010}.
Furthermore,  we show that:
\begin{enumerate}[(a)]
\item For every integer $r$, $1\leq r\leq 103$, there is a
$k$-query linear LDC of length $N_{r}$ for which
\begin{equation*}
k \leq \begin{cases}
(\sqrt{3})^r,&\text{if $r$ is even} \\
8\cdot (\sqrt{3})^{r-3},&\text{if $r$ is odd.}
\end{cases}
\end{equation*}
\item
For every integer $r\geq 104$, there is a $k$-query linear LDC of length
$N_{r}$ for which $k\leq (3/4)^{51}\cdot 2^{r}$.
\item
If $|\mathbb{M}_{2,{\rm Mersenne}}|=\infty$, then for every integer
$r\geq 1$, there is a $k$-query linear LDC of length $N_{r}$ for which
$k$ is the same as that in (a).
\end{enumerate}

The notion of LDCs is closely related to the notion of
information-theoretic private information retrieval (PIR) schemes.
It is well known that LDCs with perfectly smooth decoders imply PIR
schemes, and there is a generic transformation from LDCs to PIR
schemes \citep{KatzTrevisan:2000}. As with the LDCs of
\citet{Efremenko:2009} and \citet{ItohSuzuki:2010}, the query-efficient LDCs
obtained in this paper also have perfectly smooth
decoders\footnote[1]{Note that the decoders for the LDCs
of \citet{Yekhanin:2008} are not smooth.}.
This in
turn gives new PIR schemes with smaller communication complexity. For
instance, the LDCs from (a) above imply PIR schemes with
communication complexity $\exp(O(\sqrt[r]{\log n(\log\log n)^{r-1}}))$ for
$3^{ r/2}$ servers. Compared with the best known PIR schemes
of \citet{ItohSuzuki:2010} with the same communication complexity
for $3\cdot2^{r-2}$ servers,
where $r<104$ is even, our new schemes require fewer
servers.

We are able to identify only
50 numbers in $\mathbb{M}_{2,{\rm Mersenne}}$ by computer search with
the largest one being $M_{7331}=2^{7331}-1$.
We believe that the search for more numbers in $\mathbb{M}_{2,{\rm
Mersenne}}$ is of independent interest. In particular,  it is an interesting open problem to
determine how many numbers $\mathbb{M}_{2,\rm Mersenne}$ contains.
Compared with Mersenne primes, it seems reasonable to conjecture that
$|\mathbb{M}_{2, \rm Mersenne}|=\infty$.

\subsection{Organization}
This paper is organized as follows. In
\ref{Sec:preliminaries}, we review Efremenko's framework and the
composition method of \citet{ItohSuzuki:2010}. In
\ref{Sec:characterization},
we prove that all Mersenne numbers which are
products of two primes belong to $\mathbb{M}_{2}$ and
introduce the family $\mathbb{M}_{2,{\rm Mersenne}}$. We
also characterize the numbers in $\mathbb{M}_{2}$ and discuss how to
prove that a given number is not in $\mathbb{M}_{2}$.
 In \ref{Sec:LdcAndPir}, we
obtain new query-efficient LDCs using the
 family $\mathbb{M}_{2,{\rm Mersenne}}$. This also gives  new efficient
 PIR schemes with fewer servers. We conclude the paper in \ref{Sec:conclusion}.

 \section{Preliminaries}
 \label{Sec:preliminaries}

We briefly review
Efremenko's framework \citep{Efremenko:2009} and the composition method of
\citet{ItohSuzuki:2010}.

Let $m$ and $h$ be positive integers. The ring
$\mathbb{Z}/m\mathbb{Z}$ is denoted $\mathbb{Z}_m$. The set
$\{1,2,\ldots,m\}$ is denoted $[m]$. The {\em $\bmod$ $m$ inner
product} of two vectors $x = (x_1, \ldots , x_h) ,y =(y_1, \ldots ,
y_h) \in \mathbb{Z}_{m}^{h}$ is defined to be $\langle
x,y\rangle_{m}\equiv\sum_{i=1}^{h}x_{i} y_{i}\bmod m$. The {\em
Hamming distance} between $x$ and $y$ is denoted $d_H(x,y)$.

\begin{definition}[Locally Decodable Code]
Let $k$, $n$ and $N$ be positive integers, and $0 < \delta,\epsilon< 1$.
\label{Def:LDC} A  code $\emph{\textbf{C}}:\Sigma^{n}\rightarrow
\Gamma^{N}$ is said to be $(k,\delta,\epsilon)$-{\em locally
decodable} if there is a probabilistic decoding algorithm
$\mathcal{D}$ such that:
\begin{enumerate}
  \item For every $x\in \Sigma^{n}$, $i\in[n]$, and $y\in
  \Gamma^{N}$ such that $d_H(y,\emph{\textbf{C}}(x))\leq \delta
  N$, we have $\Pr[\mathcal{D}^{y}(i)=x_{i}]\geq 1-\epsilon$, where $\mathcal{D}^{y}$
means that
 $\mathcal{D}$ makes oracle access to $y$, and
  the probability is taken over the internal coin tosses of
$\mathcal{D}$.
  \item In every invocation, $\mathcal{D}$ makes at most $k$
  queries to  $y$.
\end{enumerate}
\end{definition}

The algorithm $\mathcal{D}$ is called a
$(k,\delta,\epsilon)$-{\em local decoding algorithm} for
$\textbf{C}$. Parameters $k$ and $N$ are called the {\em query complexity}
and {\em length} of $\textbf{C}$, respectively. The
{\em alphabets} $\Sigma$ and $\Gamma$ are often taken to be a finite field
$\mathbb{F}_{q}$, where $q$ is a prime power.
 A $k$-query LDC
$\textbf{C}:\mathbb{F}_{q}^{n}\rightarrow \mathbb{F}_{q}^{N}$ is
\textit{linear} if it is a linear transformation,  and
\textit{nonadaptive} if in every invocation, $\mathcal{D}$ makes all
 queries simultaneously. All the LDCs in this paper are linear
and nonadaptive.

\subsection{Efremenko's Framework}
\label{subSec:EfrConstruction}

Efremenko's  framework \citep{Efremenko:2009} for constructing LDCs is essentially a
generalization of the work of \citet{Yekhanin:2008}. Let $m=p_{1}p_{2}\ldots
p_{r}$ be a product of $r \geq 2$ distinct odd primes
$p_{1},p_{2},\ldots,p_{r}$. Let $S\subseteq
\mathbb{Z}_{m}\setminus\{0\}$ and $h$ be a positive integer.
Let $t$ be the multiplicative order of $2 \in\mathbb{Z}_{m}^{*}$, and
let $\gamma_{m}\in \mathbb{F}_{2^{t}}^{*}$ be a primitive $m$-th
root of unity. The  building blocks of Efremenko's framework for constructing LDCs
include both an $S$-{\em matching family} and an $S$-{\em decoding polynomial}, which are
defined as follows:

\begin{definition}[$S$-Matching Family]
\label{Def:S_matching}
For $S\subseteq\mathbb{Z}_{m}\setminus\{0\}$, a family of vectors
$\{u_{i}\}_{i=1}^{n}\subseteq \mathbb{Z}_{m}^{h}$ is called an
{\em $S$-matching family} if:
\begin{enumerate}
  \item $\langle u_{i},u_{i}\rangle_{m}=0$,  for $i\in[n]$; and
  \item $\langle u_{i},u_{j}\rangle_{m}\in S$, for distinct $i,j\in [n]$.
\end{enumerate}
\end{definition}

\begin{definition}[$S$-Decoding Polynomial]
\label{Def:S_decoding}
For $S\subseteq\mathbb{Z}_{m}\setminus\{0\}$, a
polynomial $P(X)\in\mathbb{F}_{2^{t}}[X]$ is called
an {\em $S$-decoding polynomial} if:
\begin{enumerate}
  \item $P(\gamma_{m}^{s})=0$, for $s\in S$; and
  \item $P(\gamma_m^{0})=P(1)=1$.
\end{enumerate}
\end{definition}

For any subset $S\subseteq \mathbb{Z}_{m}\setminus \{0\}$, an $S$-matching family and the corresponding $S$-decoding polynomial yield a linear LDC immediately.

\begin{theorem}[\citet{Efremenko:2009}]
\label{Thm:Efr1}
Let
$\{u_{i}\}_{i=1}^{n}\subseteq \mathbb{Z}_{m}^{h}$ be an $S$-matching
family and  $P(X)=a_{0}+a_{1}X^{b_{1}}+\ldots+a_{k-1}X^{b_{k-1}}\in
\mathbb{F}_{2^{t}}[X]$ be an $S$-decoding polynomial with $k$
monomials. Then there is a $k$-query linear LDC $\emph{\textbf{C}}:
\mathbb{F}_{2^{t}}^{n}\rightarrow \mathbb{F}_{2^{t}}^{m^{h}}$ with
encoding and decoding algorithms as in Fig. \bare\ref{Alg:Efr}.
\end{theorem}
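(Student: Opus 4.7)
The plan is to construct both the encoding and decoding algorithms explicitly from the $S$-matching family $\{u_i\}_{i=1}^n$ and the $S$-decoding polynomial $P(X)$, then verify local decodability by a direct algebraic calculation followed by a union bound over random shifts.

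For the encoding, I would index the $m^h$ coordinates of the codeword by vectors $v \in \mathbb{Z}_m^h$ and define
\[
\mathbf{C}(x)_v \;=\; \sum_{i=1}^n x_i \, \gamma_m^{\langle u_i, v\rangle_m}.
\]
This map is manifestly $\mathbb{F}_{2^t}$-linear in $x$, so $\mathbf{C}$ is a linear code of length $m^h$ over $\mathbb{F}_{2^t}$. For the decoder to recover $x_i$ from a received word $y$, pick $v \in \mathbb{Z}_m^h$ uniformly at random, query the $k$ positions $v + b_j u_i$ for $j = 0, 1, \ldots, k-1$ (taking $b_0 := 0$ so as to align with the constant term $a_0$ of $P$), and output
\[
\gamma_m^{-\langle u_i, v\rangle_m} \sum_{j=0}^{k-1} a_j \, y_{v + b_j u_i}.
\]

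The main calculation is to show that in the noiseless case (i.e.\ $y = \mathbf{C}(x)$) this recipe returns $x_i$. Substituting the encoding formula, bilinearity of $\langle\cdot,\cdot\rangle_m$ lets us factor and interchange sums to obtain
\[
\sum_{j=0}^{k-1} a_j \, \mathbf{C}(x)_{v + b_j u_i} \;=\; \sum_{\ell=1}^n x_\ell \, \gamma_m^{\langle u_\ell, v\rangle_m} \cdot P\!\bigl(\gamma_m^{\langle u_\ell, u_i\rangle_m}\bigr).
\]
The matching-family and decoding-polynomial properties then do the work in tandem: for $\ell = i$ we have $\langle u_i, u_i\rangle_m = 0$, so the $P$ factor equals $P(1) = 1$; for $\ell \neq i$ the inner product lies in $S$, so $P\bigl(\gamma_m^{\langle u_\ell, u_i\rangle_m}\bigr) = 0$. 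Hence the sum collapses to $x_i \, \gamma_m^{\langle u_i, v\rangle_m}$, and multiplying by $\gamma_m^{-\langle u_i, v\rangle_m}$ recovers $x_i$ exactly.

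Finally, I would handle the case of a corrupted $y$. Because $v$ is uniform over $\mathbb{Z}_m^h$, each individual query location $v + b_j u_i$ is itself uniformly distributed, so the probability that any fixed query hits one of the at most $\delta m^h$ corrupted coordinates is at most $\delta$. A union bound over the $k$ queries then shows that the decoder's output agrees with the noiseless computation, and hence equals $x_i$, with probability at least $1 - k\delta$. This establishes $\mathbf{C}$ as a $(k,\delta,k\delta)$-LDC, which is in particular a $k$-query linear LDC in the sense of the theorem. There is no substantial obstacle in the argument; the only nontrivial point is spotting the right encoding, after which the defining properties of the matching family and the decoding polynomial dovetail cleanly to yield the key identity above.
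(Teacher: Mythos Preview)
Your proposal is correct and matches the paper's approach exactly: the encoding and decoding algorithms you write down are precisely those displayed in the paper's Figure~\ref{Alg:Efr}, and your verification (the algebraic collapse via $P(1)=1$ and $P(\gamma_m^s)=0$ for $s\in S$, followed by the union bound giving error at most $k\delta$) is the standard argument underlying Efremenko's theorem, which the paper cites without reproving.
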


\begin{figure}[t]
\label{Alg:Efr}
\underline{\em Encoding} \\
Let $e_{j}\in \mathbb{F}_{2^{t}}^{n}$ denote the $j$-th unit vector for $j\in[n]$.
The coordinates of a codeword $\textbf{C}(x)$ are
indexed by vectors in $\mathbb{Z}_{m}^{h}$, where $x\in \mathbb{F}_{2^{t}}^{n}$.
The encoding algorithm works as follows:
\begin{enumerate}
\item for $j\in [n]$ and $v\in \mathbb{Z}_{m}^{h}$,
$\textbf{C}(e_{j})_{v}=\gamma_{m}^{\langle u_{j},v\rangle_{m}}$;
\item for $x=(x_{1},\ldots,x_{n})\in \mathbb{F}_{2^{t}}^{n}$, we have
$\textbf{C}(x)=\sum_{j=1}^{n}x_{j}\cdot \textbf{C}(e_{j})$. \\
\end{enumerate}

\underline{\em Decoding} \\
To recover $x_{i}$ from
a possibly corrupted codeword $y\in \mathbb{F}_{2^{t}}^{m^{h}}$ of any message $x$,we
\begin{enumerate}
\item choose a vector $v\in \mathbb{Z}_{m}^{h}$ uniformly and query the coordinates
$y_{v},y_{v+b_{1}u_{i}},\ldots, y_{v+b_{k-1}u_{i}}$;
\item output $\gamma_{m}^{-\langle u_{i},v\rangle_{m}}\cdot (a_{0}\cdot y_{v}+a_{1}\cdot y_{v+b_{1}u_{i}}+
  \ldots+a_{k-1}\cdot y_{v+b_{k-1}u_{i}})$.
\end{enumerate}
\hrule
\caption{Efremenko's Framework for Constructing LDCs}
\end{figure}

\ref{Thm:Efr1} shows that for any $S\subseteq
\mathbb{Z}_{m}\setminus\{0\}$, an $S$-matching family of size $n$
and an $S$-decoding polynomial with $k$ monomials yield a $k$-query
LDC which encodes each message of length $n$ into a codeword of
length $m^{h}$. Once  $m$ and $h$ are fixed, the  length $N$
is inversely proportional to  $n$. Hence, ideally, $n$ should be large and $k$ small.
To have a large $S$-matching family, the set $S$ is usually taken to be
$S_{m}$, the {\em canonical set} of $m$, which is defined as follows:

\begin{definition}[Canonical Set]
\label{Def:Canonical_set} Let $m=p_{1}p_{2}\ldots p_{r}$ be the
product of $r \geq 2$ distinct odd primes
$p_{1},p_{2},\ldots,p_{r}$. The {\em canonical set} of $m$ is
defined to be
\begin{equation*}
S_{m}=\left\{s_{\sigma}\in \mathbb{Z}_{m} : \text{ $\sigma\in
\{0,1\}^{r}\setminus\{{\bf 0}\}$ and $s_{\sigma}\equiv
\sigma_{i}\bmod{p_{i}}$, for $i\in[r]$}\right\}.
\end{equation*}
\end{definition}

For every integer $r \geq 2$, \citet{Efremenko:2009} proved that
there exist an $S_{m}$-matching family of superpolynomial size and
an $S_{m}$-decoding polynomial with at most $2^{r}$ monomials.

\begin{proposition}[\citep{Efremenko:2009}]
\label{Lem:Efr}
Let
$m=p_{1}p_{2}\ldots p_{r}$ be the product of $r \geq 2$ distinct odd
primes $p_{1},p_{2},\ldots,p_{r}$.
\begin{enumerate}
\item There is a positive constant $c$, depending only on $m$,
such that for every integer $h>0$,
there is an $S_{m}$-matching family $\{u_{i}\}_{i=1}^{n}\subseteq
\mathbb{Z}_{m}^{h}$ of size $n\geq \exp\left(c(\log h)^{r}/(\log\log
h)^{r-1}\right)$.
\item There is an $S_{m}$-decoding
polynomial with at most $2^{r}$ monomials.
\end{enumerate}
\end{proposition}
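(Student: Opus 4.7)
The plan is to attack the two assertions independently, since they rest on very different ingredients. For part~(2) (the $S_m$-decoding polynomial) I would exploit the Chinese remainder structure of $\mathbb{Z}_m$ directly. Let $e_1,\ldots,e_r\in\mathbb{Z}_m$ be the orthogonal idempotents of the decomposition $\mathbb{Z}_m\cong\prod_{i=1}^{r}\mathbb{Z}_{p_i}$, so that $e_i\equiv\delta_{ij}\pmod{p_j}$, $\sum_i e_i=1$, and $e_ie_j=\delta_{ij}e_i$ in $\mathbb{Z}_m$. Set $\beta_i:=\gamma_m^{e_i}\in\mathbb{F}_{2^t}^{*}$, which is a primitive $p_i$-th root of unity, and consider
\[
P(X)\;=\;c^{-1}\prod_{i=1}^{r}\bigl(1+\beta_i^{-1}X^{e_i}\bigr),\qquad c\;:=\;\prod_{i=1}^{r}\bigl(1+\beta_i^{-1}\bigr),
\]
where $c\neq 0$ since no $\beta_i$ equals $1$. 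Expanding, $P$ is a sum of $2^r$ monomials $X^{s_T}$ indexed by $T\subseteq[r]$ with $s_T\equiv\sum_{i\in T}e_i\pmod m$; these residues are pairwise distinct because their $p_k$-residue is the indicator $[k\in T]$. For $s\in S_m$ with residue vector $\sigma\in\{0,1\}^{r}\setminus\{\mathbf{0}\}$, CRT gives $s\equiv\sum_i\sigma_i e_i\pmod m$, hence $se_k\equiv\sigma_k e_k\pmod m$, so the $k$-th factor evaluated at $\gamma_m^{s}$ is $1+\beta_k^{-1}\beta_k^{\sigma_k}$. In characteristic~$2$, this factor vanishes whenever $\sigma_k=1$, and since at least one $\sigma_k=1$, I get $P(\gamma_m^{s})=0$; meanwhile the normalization forces $P(1)=c^{-1}c=1$.

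For part~(1) (the matching family) I would invoke the Grolmusz construction of set systems with restricted intersections modulo~$m$ as a black box. That result produces, for every sufficiently large $h$, a family $\mathcal{F}=\{F_1,\ldots,F_n\}\subseteq 2^{[h]}$ of size $n\ge\exp\!\bigl(c(\log h)^{r}/(\log\log h)^{r-1}\bigr)$ such that $|F_i|\equiv 0\pmod m$ for every~$i$, while for distinct $i,j$ and every $k\in[r]$ the residue $|F_i\cap F_j|\bmod p_k$ lies in $\{0,1\}$ and is not identically zero across $k$. Defining $u_i\in\{0,1\}^{h}\subseteq\mathbb{Z}_m^{h}$ to be the characteristic vector of $F_i$, I get $\langle u_i,u_i\rangle_m=|F_i|\equiv 0\pmod m$ and, by CRT, $\langle u_i,u_j\rangle_m=|F_i\cap F_j|\bmod m\in S_m$. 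This is exactly the $S_m$-matching property, with the stipulated superpolynomial size.

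The main obstacle I anticipate lies squarely in the Grolmusz step, which itself rests on the Barrington--Beigel--Rudich low-degree polynomial representation of the \textsc{or} function over $\mathbb{Z}_m$ together with a nontrivial probabilistic amplification turning such representations into large set systems with the prescribed intersection pattern. I would cite it as a black box, since a full treatment is a substantial detour; by contrast, the decoding polynomial above is an elementary consequence of the CRT idempotent structure combined with the identity $1+1=0$ in characteristic~$2$, and the reduction from Grolmusz's set system to the matching family is immediate once the intersection residues are confined coordinatewise to $\{0,1\}$.
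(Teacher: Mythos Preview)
Your proposal is correct. Note, however, that the paper does not supply its own proof of this proposition: it is stated as a citation of Efremenko (2009) and used as a black box, so there is no in-paper argument to compare against. Your sketch in fact recapitulates the essential content of Efremenko's original proof. For part~(2), your CRT-idempotent product $P(X)=c^{-1}\prod_{i=1}^{r}(1+\beta_i^{-1}X^{e_i})$ is precisely the standard construction; the verification that each factor vanishes at $\gamma_m^{s}$ whenever $\sigma_k=1$ (via $1+1=0$ in characteristic~$2$) and that the $2^r$ exponents $\sum_{i\in T}e_i$ are pairwise distinct modulo~$m$ is clean and complete. For part~(1), you are right that the substance lies entirely in Grolmusz's set-system construction (itself built on the Barrington--Beigel--Rudich representation of \textsc{or} over $\mathbb{Z}_m$), and that the passage from the set system to an $S_m$-matching family via characteristic vectors is immediate once the intersection sizes are controlled coordinatewise. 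Citing Grolmusz as a black box is exactly what both Efremenko and the present paper do.
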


Efremenko's linear LDCs of subexponential length now immediately follow from
\ref{Thm:Efr1} and \ref{Lem:Efr}.

\begin{theorem}[\citep{Efremenko:2009}]
\label{Thm:Efr2}
For every integer $r \geq 2$, there is a linear
$(k_{r},\delta,k_{r}\delta)$-LDC of length
$N_{r}=\exp(\exp(O(\sqrt[r]{\log n(\log \log n)^{r-1}})))$ for which
$k_{r}\leq 2^r$. In particular, when $r=2$, there is a linear
$(3,\delta,3\delta)$-LDC  of length $N_{2}=\exp(\exp (O(\sqrt{\log n\log\log n})))$ .
\end{theorem}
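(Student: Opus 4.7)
The plan is to derive this theorem directly by composing the two tools already in hand: \ref{Thm:Efr1}, which turns a matching family and a decoding polynomial into an explicit linear LDC, and \ref{Lem:Efr}, which furnishes both objects for the canonical set $S_m$. First I would fix $m=p_{1}\cdots p_{r}$ as a product of $r$ distinct odd primes and, for every positive integer $h$, feed the $S_{m}$-matching family $\{u_{i}\}_{i=1}^{n}\subseteq\mathbb{Z}_{m}^{h}$ of size $n\geq\exp(c(\log h)^{r}/(\log\log h)^{r-1})$ produced by \ref{Lem:Efr}(1), together with an $S_{m}$-decoding polynomial of at most $2^{r}$ monomials produced by \ref{Lem:Efr}(2), into the machinery of \ref{Thm:Efr1}. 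This immediately yields a $k_{r}$-query linear code $\textbf{C}:\mathbb{F}_{2^{t}}^{n}\to\mathbb{F}_{2^{t}}^{m^{h}}$ with $k_{r}\leq 2^{r}$.

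I would then verify the local decodability guarantee by inspecting the decoder of \ref{Alg:Efr}. The $k_{r}$ queries are made at the coordinates $v,\ v+b_{1}u_{i},\ \ldots,\ v+b_{k_{r}-1}u_{i}$, each of which is marginally uniform in $\mathbb{Z}_{m}^{h}$ when $v$ is sampled uniformly. A union bound then bounds the probability that any query lands on a corrupted coordinate by $k_{r}\delta$; conditioned on the complementary event, the $S_{m}$-matching and $S_{m}$-decoding conditions guarantee that the output of the decoder is exactly $x_{i}$, since a short calculation with the primitive root $\gamma_{m}$ collapses the linear combination to $\gamma_{m}^{-\langle u_{i},v\rangle_{m}}\cdot P(\gamma_{m}^{\langle u_{i},u_{i}\rangle_{m}})\cdot \gamma_{m}^{\langle u_{i},v\rangle_{m}}\cdot x_{i}=P(1)\cdot x_{i}=x_{i}$. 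This yields the $(k_{r},\delta,k_{r}\delta)$-LDC property.

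For the length bound $N\leq N_{r}$, I would perform a routine asymptotic inversion. Since $N=m^{h}$, we have $\log\log N=\log h+O(1)$; inverting the inequality $\log n\geq c\,(\log h)^{r}/(\log\log h)^{r-1}$ and using $\log\log h=\Theta(\log\log n)$ yields $\log h=O\bigl((\log n\cdot(\log\log n)^{r-1})^{1/r}\bigr)$, matching the claimed form of $N_{r}$.

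The main obstacle is the improvement from $k_{2}\leq 2^{2}=4$, which the generic bound gives, down to the stated $k_{2}=3$ in the $r=2$ case. Here I would adopt Efremenko's specific choice $m=511=7\cdot 73$, whose multiplicative order of $2$ is $t=9$, and exhibit by direct algebraic computation a three-monomial polynomial $P(X)\in\mathbb{F}_{2^{9}}[X]$ satisfying $P(\gamma_{511}^{s})=0$ for every $s\in S_{511}$ and $P(1)=1$, thereby sharpening the generic four-monomial bound. Isolating the algebraic feature of $511$ that makes this shrinking possible is precisely the ``algebraic niceness'' phenomenon that the remainder of the paper sets out to characterize and generalize, so the $r=2$ statement functions both as the endpoint of the present section and as motivation for what follows.
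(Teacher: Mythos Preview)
Your proposal is correct and follows exactly the approach the paper takes: the paper simply states that the theorem ``immediately follow[s] from \ref{Thm:Efr1} and \ref{Lem:Efr}'' without further argument, and your write-up fills in precisely those details (the union-bound error analysis, the asymptotic inversion of $n\geq\exp(c(\log h)^{r}/(\log\log h)^{r-1})$, and the $m=511$ specialization for $k_{2}=3$). Nothing is missing and nothing differs in substance.
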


\subsection{The Composition Method}

For every integer $r \geq 2$, there is a $k_{r}$-query linear LDC of
subexponential length $N_{r}$ by \ref{Thm:Efr2}, but its
query complexity $k_{r}$ is only upper bounded by $2^{r}$. It is
attractive to improve the query complexity.
This is the motivation for Itoh and Suzuki's  composition method.

Let $m_{1}=p_{1}p_{2}\ldots p_{r}$ be the product of $r$ distinct odd
primes $p_{1},p_{2}\ldots,p_{r}$ and $ m_{2}=q_{1}q_{2}\ldots
q_{l}$ the product of $l$ distinct odd primes
$q_{1},q_{2}\ldots,q_{l}$, where $r, l\geq 2$. Suppose
$\gcd(m_{1},m_{2})=1$. Let $m=m_{1}m_{2}$, and $t_{1}$, $t_{2}$, and $t$ be
the multiplicative orders of 2 in
$\mathbb{Z}_{m_{1}}^{*}$, $\mathbb{Z}_{m_{2}}^{*}$, and
$\mathbb{Z}_{m}^{*}$, respectively. By \ref{Thm:Efr1} and
\ref{Thm:Efr2}, there are linear LDCs
$\textbf{C}_{r}:\mathbb{F}_{2^{t_{1}}}^{n}\rightarrow
\mathbb{F}_{2^{t_{1}}}^{N_{r}}$,
$\textbf{C}_{l}:\mathbb{F}_{2^{t_{2}}}^{n}\rightarrow
\mathbb{F}_{2^{t_{2}}}^{N_{l}}$ and
$\textbf{C}_{r+l}:\mathbb{F}_{2^{t}}^{n}\rightarrow
\mathbb{F}_{2^{t}}^{N_{r+l}}$ of query complexities $k_{r}\leq
2^{r}$, $k_{l}\leq 2^{l}$, and $k_{r+l}\leq 2^{r+l}$, respectively.
Let $P_{1}(X)\in \mathbb{F}_{2^{t_{1}}}[X]$ and $P_{2}(X)\in
\mathbb{F}_{2^{t_{2}}}[X]$ be the $S_{m_{1}}$-decoding polynomial
for $\textbf{C}_{r}$ and $S_{m_{2}}$-decoding polynomial for
$\textbf{C}_{l}$, respectively. Let
$\gamma_{m_{1}}$, $\gamma_{m_{2}}$, and $\gamma_{m}$ be the primitive
$m_{1}$-th, $m_{2}$-th and $m$-th roots of unity used in the
encoding algorithms of
$\textbf{C}_{r}$, $\textbf{C}_{l}$, and $\textbf{C}_{r+l}$, respectively. It is
not hard to see that there are integers $\mu$ and $\nu$ such that
$\gamma_{m_{1}}=\gamma_{m}^{\mu m_{2}}$ and
$\gamma_{m_{2}}=\gamma_{m}^{\nu m_{1}}$. \citet{ItohSuzuki:2010}
proved that $P(X)=P_{1}(X^{\mu m_{2}})P_{2}(X^{\nu m_{1}})\in
\mathbb{F}_{2^{t}}[X]$ is an $S_{m}$-decoding polynomial for
$\textbf{C}_{r+l}$. Obviously, $P(X)$ contains at most
$k_{r}k_{l}$ monomials. Hence, the composition theorem below
follows.

\begin{theorem}[\citep{ItohSuzuki:2010}] \label{Thm:Itoh1}
With notations as above, there is a $k$-query linear LDC
$\emph{\textbf{C}}:\mathbb{F}_{2^{t}}^{n}\rightarrow
\mathbb{F}_{2^{t}}^{N_{r+l}}$ for which  $k\leq k_{r}k_{l}$.
\end{theorem}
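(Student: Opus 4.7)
The plan is to prove the theorem by constructing the composed code $\textbf{C}$ through Efremenko's framework (\ref{Thm:Efr1}) using $m = m_1 m_2$, an $S_m$-matching family provided by \ref{Lem:Efr}, and the candidate decoding polynomial $P(X) = P_1(X^{\mu m_2}) P_2(X^{\nu m_1})$. The bulk of the argument is then to verify that $P(X)$ really is an $S_m$-decoding polynomial in the sense of \ref{Def:S_decoding}, and to count its monomials.

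First I would pin down the exponents $\mu$ and $\nu$: since $\gcd(m_1,m_2)=1$, by B\'ezout I can choose integers with $\mu m_2 \equiv 1 \pmod{m_1}$ and $\nu m_1 \equiv 1 \pmod{m_2}$. This makes $\gamma_m^{\mu m_2}$ an element of order $m_1$ and $\gamma_m^{\nu m_1}$ an element of order $m_2$ in $\mathbb{F}_{2^t}^{*}$, so up to fixing a choice of primitive roots we may identify $\gamma_{m_1} = \gamma_m^{\mu m_2}$ and $\gamma_{m_2} = \gamma_m^{\nu m_1}$ (as is implicit in the statement). Note that both live in $\mathbb{F}_{2^t}$ because $t_1, t_2 \mid t$.

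Next, I would verify the two defining properties of an $S_m$-decoding polynomial. The easy one is normalization: $P(1) = P_1(1)\, P_2(1) = 1 \cdot 1 = 1$. For the vanishing condition, fix $s \in S_m$ arising from some $\sigma \in \{0,1\}^{r+l} \setminus \{\mathbf{0}\}$; split $\sigma = (\sigma^{(1)}, \sigma^{(2)})$ according to the primes of $m_1$ and $m_2$, and set $s^{(1)} = s \bmod m_1$, $s^{(2)} = s \bmod m_2$. By the Chinese Remainder Theorem, $s^{(i)}$ is either $0$ or an element of the canonical set $S_{m_i}$, depending on whether $\sigma^{(i)}$ is zero. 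Then
\begin{equation*}
P(\gamma_m^s) = P_1\!\left(\gamma_m^{s\mu m_2}\right) P_2\!\left(\gamma_m^{s\nu m_1}\right) = P_1\!\left(\gamma_{m_1}^{s^{(1)}}\right) P_2\!\left(\gamma_{m_2}^{s^{(2)}}\right),
\end{equation*}
where I have used that $\gamma_m^{\mu m_2}$ has order $m_1$ to reduce exponents modulo $m_1$, and similarly for the second factor. Since $\sigma \neq \mathbf{0}$, at least one of $\sigma^{(1)}, \sigma^{(2)}$ is non-zero, so at least one of $s^{(1)} \in S_{m_1}$, $s^{(2)} \in S_{m_2}$ holds; by the vanishing property of $P_1$ or $P_2$, the corresponding factor is $0$, and hence $P(\gamma_m^s)=0$.

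Finally I would count monomials and finish. The substitution $X \mapsto X^{\mu m_2}$ does not change the number of monomials in $P_1$, and the same for $P_2$; therefore the product $P_1(X^{\mu m_2}) P_2(X^{\nu m_1})$ has at most $k_r k_l$ monomials (with possible further cancellation/combination only helping). Combining this $S_m$-decoding polynomial with an $S_m$-matching family of the right size from \ref{Lem:Efr}, \ref{Thm:Efr1} produces a linear $k$-query LDC with $k \leq k_r k_l$ and length $m^h = N_{r+l}$ for the appropriate $h$. The only real subtlety is the CRT bookkeeping in the vanishing step, especially the case where exactly one of $\sigma^{(1)}, \sigma^{(2)}$ is zero, which is precisely where it matters that $S_{m_i}$ excludes the zero element but includes every non-zero $(0/1)$-pattern modulo the primes of $m_i$.
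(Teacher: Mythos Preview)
Your proposal is correct and follows exactly the approach the paper sketches in the paragraph preceding \ref{Thm:Itoh1}: form $P(X)=P_{1}(X^{\mu m_{2}})P_{2}(X^{\nu m_{1}})$, verify via the Chinese Remainder Theorem that it is an $S_{m}$-decoding polynomial, bound its monomial count by $k_{r}k_{l}$, and feed this into \ref{Thm:Efr1} together with the matching family from \ref{Lem:Efr}. The paper does not spell out the CRT case analysis or the B\'ezout choice of $\mu,\nu$ (it attributes the result to \citet{ItohSuzuki:2010}), but your details match what is needed and contain no gaps.
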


\ref{Thm:Itoh1} shows that
Efremenko's LDC $\textbf{C}_{r+l}$ essentially has a local decoding algorithm which makes at most
$k_{r}k_{l}$ queries.
The key idea of the composition method is as follows: if we choose
the building blocks $\textbf{C}_{r}$ and $\textbf{C}_{l}$ in such
a way that either $k_{r}<2^{r}$ or $k_{l}<2^{l}$, then a local decoding algorithm for
$\textbf{C}_{r+l}$ which makes less than $2^{r+l}$ queries follows.
%
%
%
 For every integer $r\geq 4$,  applying
\ref{Thm:Itoh1} to Efremenko's 3-query  LDC $\textbf{C}_{2}$ (based
on $m_{1}=511$) of length $N_{2}$ and $k_{r-2}$-query LDC
$\textbf{C}_{r-2}$ (based on $m_{2}=q_{1}\ldots q_{r-2}$ such that
$\gcd(m_{1},m_{2})=1$) of length $N_{r-2}$ gives:

\begin{corollary}[\citep{ItohSuzuki:2010}] \label{Thm:Itoh2}
For every integer $r\geq 4$, there is a $k$-query linear LDC
$\emph{\textbf{C}}$ of length $N_{r}$ in which $k\leq 3\cdot
2^{r-2}$.
\end{corollary}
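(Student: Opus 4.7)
The plan is to apply \ref{Thm:Itoh1} to two carefully chosen Efremenko building blocks, namely Efremenko's 3-query LDC based on $m_1 = 511$ and a generic $k_{r-2}$-query LDC based on a product of $r-2$ odd primes coprime to $511$. For the first block, take $m_1 = 511 = 7\times 73$, which by Efremenko's analysis admits an $S_{m_1}$-decoding polynomial with only $3$ monomials rather than the generic $2^2=4$; combined with an $S_{m_1}$-matching family over $\mathbb{Z}_{m_1}^{h_1}$ as in \ref{Lem:Efr}, \ref{Thm:Efr1} produces a 3-query linear LDC $\textbf{C}_2$ of length $N_2$.

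For the second block, choose $m_2 = q_1 q_2 \cdots q_{r-2}$ as the product of any $r-2$ distinct odd primes avoiding $\{7, 73\}$; such a choice is always possible since there are infinitely many odd primes. This guarantees $\gcd(m_1, m_2) = 1$, which is precisely the hypothesis required by \ref{Thm:Itoh1}. Because $r \geq 4$ gives $r-2 \geq 2$, \ref{Thm:Efr2} yields a linear LDC $\textbf{C}_{r-2}$ of length $N_{r-2}$ whose $S_{m_2}$-decoding polynomial has at most $k_{r-2} \leq 2^{r-2}$ monomials.

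I would then invoke \ref{Thm:Itoh1} with $\textbf{C}_2$ playing the role of the ``$\textbf{C}_r$'' block (with parameter $r=2$, so $k_r = 3$) and $\textbf{C}_{r-2}$ playing the role of the ``$\textbf{C}_l$'' block (with parameter $l = r-2$, so $k_l \leq 2^{r-2}$). The theorem produces a linear LDC over the alphabet $\mathbb{F}_{2^t}$, where $t$ is the multiplicative order of $2$ in $\mathbb{Z}_{m_1 m_2}^*$, whose query complexity is at most $k_r \cdot k_l \leq 3 \cdot 2^{r-2}$. Since $m := m_1 m_2$ is a product of exactly $r$ distinct odd primes, \ref{Lem:Efr} (invoked with parameter $r$) supplies a matching $S_m$-family of the size needed for the length bound $N_r$, completing the construction.

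The proof is essentially bookkeeping on top of \ref{Thm:Itoh1}: the only nontrivial input is that $511$ yields a 3-query $S_{511}$-decoding polynomial (which is the content of Efremenko's analysis and is taken as given here), and the only side condition to verify is the coprimality $\gcd(m_1, m_2) = 1$, which is handled by excluding the two primes $7$ and $73$ from the chosen factors of $m_2$. I do not expect any genuine obstacle; the content of the corollary lies entirely in the composition theorem and in Efremenko's savings for $m_1 = 511$, both of which are available.
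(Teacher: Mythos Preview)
Your proposal is correct and follows exactly the approach the paper takes: the paper states the corollary as an immediate application of \ref{Thm:Itoh1} to Efremenko's 3-query LDC $\textbf{C}_2$ based on $m_1=511$ and a $k_{r-2}$-query LDC $\textbf{C}_{r-2}$ based on $m_2=q_1\cdots q_{r-2}$ with $\gcd(m_1,m_2)=1$. Your bookkeeping (excluding the primes $7$ and $73$ from $m_2$, and noting $r-2\geq 2$ so that \ref{Thm:Efr2} applies) is exactly what is needed.
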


We note that  Efremenko's 3-query linear LDC
is crucial to the improvement provided by
\ref{Thm:Itoh2}. The existence of this code depends on a
carefully chosen good composite number $m_{1}=511$. It is natural to ask
whether there are good composite numbers other than 511 based on which a
3-query linear LDC of length $N_{2}$ can be obtained from Efremenko's
construction.

For every positive integer $r \geq 2$, we denote
by $\mathbb{M}_{r}$  the set of  integers, each of which is a
product of $r$ distinct odd primes and can yield a $k$-query
linear LDC of length $N_{r}$ for which $k<2^{r}$ in Efremenko's construction.
\citet{Efremenko:2009}  showed that $511\in
\mathbb{M}_{2}$ and built their 3-query LDC on this number.
\citet{ItohSuzuki:2010} proved that $15\not\in\mathbb{M}_{2}$ by exhaustive
search. Both \citet{Efremenko:2009} and \citet{ItohSuzuki:2010} left as an open problem to find
elements of $\mathbb{M}_{2}$ other than 511. We provide an answer to this problem in
the next section.

We end this section with some algebra required to establish our results.

\subsection{Group Rings, Characters and Cyclotomic Cosets}

Let $G$ be a finite multiplicative abelian group. The \emph{group ring}
\begin{equation*}
\mathbb{Z}[G]=\left\{\sum_{g\in G}a_{g}g: a_{g}\in \mathbb{Z}\right\}
\end{equation*}
is a ring of formal sums, in
which  addition
 and multiplication are defined as follows:
\begin{align*}
A+B &=\sum_{g\in G}(a_{g}+b_{g})g, \\
A\cdot B &=\sum_{g\in G}\sum_{h\in G}a_{g}b_{h}gh,
\end{align*}
where $A=\sum_{g\in G}a_{g}g,B=\sum_{g\in G}b_{g}g\in \mathbb{Z}[G]$.
The following are standard notations:
\begin{align*}
A^{(j)} &=\sum_{g\in G}a_{g}g^{j},~~~~~\forall j\in \mathbb{Z}, \\
D &=\sum_{g\in D}g,~~~~~
\forall D\subseteq G.
\end{align*}

Let $\mathbb{C}$ be the field of complex numbers and
$\mathbb{C}^{*}$ its multiplicative group. Any group homomorphism
$\chi:G\rightarrow \mathbb{C}^{*}$ is called a \emph{character} of
$G$. If $|G|=n$, then it has exactly $n$ distinct characters. Let
$\widehat{G}$ be the set of all characters of $G$. Then
$\widehat{G}$ is a multiplicative group in which
$\chi_{1}\chi_{2}(g)=\chi_{1}(g)\chi_{2}(g)$ for all
$\chi_{1},\chi_{2}\in \widehat{G}, g\in G$. The identity $\chi_{0}$ of $\widehat{G}$,
called the \emph{principal character}, maps every $g\in G$ to $1\in
\mathbb{C}^{*}$. For every $\chi\in \widehat{G}$, the \emph{order} of
$\chi$ is defined to be the least positive integer $l$ such that
$\chi^{l}=\chi_{0}$. Every $\chi\in \widehat{G}$ can be easily
extended to $\mathbb{Z}[G]$ linearly: $\chi(A)=\sum_{g\in
G}a_{g}\chi(g)$. The following properties are well-known:
\begin{enumerate}
\item If $|G|=n<\infty$, then for any  $\chi\in \widehat{G}$  and
 $g\in G$, $\chi(g)^{n}=1$.
  \item If $\chi\in \widehat{G}\setminus\{\chi_{0}\}$, then $\sum_{g\in
  G}\chi(g)=0$.
  \item $\chi(A^{(-1)})=\overline{\chi(A)}$, for every $\chi \in \widehat{G}, A\in
  \mathbb{Z}[G]$.
\end{enumerate}

Let $p$ be a prime or prime power and $m\in \mathbb{Z}^{+}$
such that $\gcd(p,m)=1$. For every $s\in \mathbb{Z}_{m}$, the
\emph{cyclotomic coset of $p$ modulo $m$ containing $s$} is defined
to be the following set
$$E_{s}=\{(sp^{l} \bmod m)\in \mathbb{Z}_{m}: l=0,1,\ldots\},$$
where $s$ is called  \emph{coset representative} of
$E_{s}$. We always suppose that  $s$ is
 smallest in $E_{s}$.
 It is well-known that all distinct cyclotomic cosets
of $p$ modulo $m$ form a partition of $\mathbb{Z}_{m}$.

The interested reader is referred to \citet{CurtisReiner:2006,Washington:1997,MacWilliamsSloane:1977,McDonald:1974}
for more information.

\section{\boldmath Mersenne Numbers which are Products of Two Primes Belong to $\mathbb{M}_{2}$}
\label{Sec:characterization}

In this section, we answer the open problem raised by \citet{Efremenko:2009} and
\citet{ItohSuzuki:2010} by proving that any Mersenne number which is
the product of two primes belongs to $\mathbb{M}_{2}$. This result allows us
to obtain a family of numbers in $\mathbb{M}_{2}$. Furthermore, we also give  characterizations
of numbers in $\mathbb{M}_{2}$, which turn out to be helpful for deciding whether
a given number is in $\mathbb{M}_{2}$.

Let $m=pq$ be the product of two distinct odd primes $p$ and $q$.
 Let $t$ be the multiplicative order of
2 in $\mathbb{Z}_{m}^{*}$, and let $\gamma_{m}\in
\mathbb{F}_{2^{t}}^{*}$ be a primitive $m$-th root of unity. Let
$S_{m}=\{s_{11}=1,s_{01},s_{10}\}$ be the canonical set of $m$.
 Then the set
of $S_{m}$-decoding polynomials is
\begin{equation*}
\mathcal{F}=\left\{f(X)\in
\mathbb{F}_{2^{t}}[X] : \text{$
f(\gamma_{m})=f(\gamma_{m}^{s_{01}})=f(\gamma_{m}^{s_{10}})=0$ and $f(1)=1$}\right\}.
\end{equation*}
By Lagrange interpolation, there exists $f\in \mathcal{F}$ that
contains at most four monomials.  On the other hand, we have the following proposition.

\begin{proposition}
\label{Lem:New100}
Let $m=pq$ be the product of two distinct odd primes. Then any
$S_{m}$-decoding polynomial contains at least three monomials.
\end{proposition}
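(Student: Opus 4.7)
My plan is to argue by contradiction: assume $f \in \mathcal{F}$ has at most two monomials and derive a clash with the characteristic $2$ of the ambient field.

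First I would dispose of the one-monomial case quickly. If $f(X) = aX^b$, then $f(1) = 1$ forces $a = 1$, but then $f(\gamma_m) = \gamma_m^b \neq 0$ since $\gamma_m \in \mathbb{F}_{2^t}^*$, contradicting $f(\gamma_m) = 0$. So I may assume $f(X) = aX^{b_1} + cX^{b_2}$ with $a,c \in \mathbb{F}_{2^t}^*$ and $b_1 \neq b_2$. The normalization $f(1) = 1$ becomes $a + c = 1$.

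Next I would exploit the three vanishing conditions. The condition $f(\gamma_m) = 0$ reads $a\gamma_m^{b_1} = c\gamma_m^{b_2}$ (recall we are in characteristic $2$, so signs vanish), i.e.\ $\gamma_m^{b_1 - b_2} = c/a$. Applying the same manipulation to $f(\gamma_m^{s_{01}}) = 0$ and $f(\gamma_m^{s_{10}}) = 0$ yields $\gamma_m^{s_{01}(b_1 - b_2)} = c/a$ and $\gamma_m^{s_{10}(b_1 - b_2)} = c/a$. Combining, and using that $\gamma_m$ has multiplicative order exactly $m$, I obtain the congruences
\begin{equation*}
(s_{01} - 1)(b_1 - b_2) \equiv 0 \pmod{m}, \qquad (s_{10} - 1)(b_1 - b_2) \equiv 0 \pmod{m}.
\end{equation*}

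The key arithmetic step is to analyze $\gcd(s_{01} - 1, m)$ and $\gcd(s_{10} - 1, m)$ via the Chinese Remainder Theorem definition of the canonical set. Since $s_{01} \equiv 0 \pmod{p}$ and $s_{01} \equiv 1 \pmod{q}$, we have $s_{01} - 1 \equiv -1 \pmod{p}$ and $s_{01} - 1 \equiv 0 \pmod{q}$, so $\gcd(s_{01} - 1, m) = q$. Symmetrically $\gcd(s_{10} - 1, m) = p$. The first congruence thus forces $p \mid (b_1 - b_2)$ and the second forces $q \mid (b_1 - b_2)$, so $m \mid (b_1 - b_2)$.

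Finally I would close the argument: $m \mid (b_1 - b_2)$ gives $\gamma_m^{b_1 - b_2} = 1$, so $c/a = 1$, i.e.\ $a = c$. Combined with $a + c = 1$, this yields $0 = 1$ in $\mathbb{F}_{2^t}$, a contradiction. I do not anticipate a serious obstacle here; the only subtlety is remembering that we work in characteristic $2$ (which both simplifies the vanishing equations by killing signs and supplies the final contradiction) and the coprimality computation in the penultimate paragraph, which is the conceptual heart of the argument.
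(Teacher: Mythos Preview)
Your proof is correct and follows essentially the same approach as the paper's: assume a two-monomial $S_m$-decoding polynomial, equate the three vanishing conditions to force $m\mid (b_1-b_2)$, and derive the contradiction $a=1+a$ in characteristic~$2$. The only cosmetic difference is in the divisibility step: the paper observes $\gcd(m,s_{10}-s_{01})=1$ to conclude $m\mid(b_1-b_2)$ in one stroke, whereas you split into $p\mid(b_1-b_2)$ and $q\mid(b_1-b_2)$ via the two congruences separately---both arguments are equally short and valid.
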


\begin{proof}
Suppose $f(X)=ax^{u}+bx^{v}\in \mathcal{F}$ is an $S_{m}$-decoding
polynomial with less than three monomials. Then
$a\gamma_{m}^{u}+b\gamma_{m}^{v}=a\gamma_{m}^{us_{01}}
+b\gamma_{m}^{vs_{01}}=a\gamma_{m}^{us_{10}}+b\gamma_{m}^{vs_{10}}=0$
and $a+b=1$. It follows that
$a\gamma_{m}^{u-v}=a\gamma_{m}^{(u-v)s_{01}}=a\gamma_{m}^{(u-v)s_{10}}=1+a$.
Obviously, $a\neq 0$ and therefore
$\gamma_{m}^{u-v}=\gamma_{m}^{(u-v)s_{01}}=\gamma_{m}^{(u-v)s_{10}}$.
This implies that $m|\gcd((u-v)(s_{01}-1),
(u-v)(s_{10}-1),(u-v)(s_{10}-s_{01}))$. Since $\gcd(m,
s_{10}-s_{01})=1$, we have $m|(u-v)$. Hence,
$a=a\gamma_{m}^{u-v}=a\gamma_{m}^{(u-v)s_{01}}=a\gamma_{m}^{(u-v)s_{10}}=1+a$,
which is a contradiction.
\end{proof}

\ref{Lem:New1} shows that for $m=pq$, the best we can expect
is to have an $S_{m}$-decoding polynomial with exactly three
monomials. Let
\begin{equation*}
\mathcal{G}=\left\{g(X)\in \mathbb{F}_{2^{t}}[X]: \text{
$g(\gamma_{m})=g(\gamma_{m}^{s_{01}})=g(\gamma_{m}^{s_{10}})=0$ and
$g(1)\neq 0$}\right\}.
\end{equation*}
Then we have the following result.

\begin{proposition}
\label{Lem:New2}
There is an $S_{m}$-decoding polynomial $f\in \mathcal{F}$ with
three monomials if and only if there is a polynomial $g\in\mathcal{G}$ with three monomials.
\end{proposition}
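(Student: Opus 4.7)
The plan is to observe that the two directions reduce to a trivial rescaling, since $\mathcal{F}$ is just the subset of $\mathcal{G}$ normalized so that the value at $1$ is exactly $1$.

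For the forward direction, suppose $f \in \mathcal{F}$ has exactly three monomials. By definition of $\mathcal{F}$ we have $f(1) = 1 \neq 0$, and $f$ vanishes at $\gamma_m, \gamma_m^{s_{01}}, \gamma_m^{s_{10}}$. Hence $f$ already lies in $\mathcal{G}$, and it trivially still has three monomials. So the forward implication requires no work beyond unpacking definitions.

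For the backward direction, suppose $g \in \mathcal{G}$ has three monomials. Since $g(1) \neq 0$ and $g(1) \in \mathbb{F}_{2^t}$, the element $g(1)$ is a unit, and I can define $f(X) = g(1)^{-1} g(X) \in \mathbb{F}_{2^t}[X]$. Scaling a polynomial by a nonzero constant does not change the set of monomials it contains (only their coefficients), so $f$ also has exactly three monomials. The vanishing conditions $f(\gamma_m^s) = g(1)^{-1} g(\gamma_m^s) = 0$ for $s \in S_m = \{1, s_{01}, s_{10}\}$ carry over immediately, and by construction $f(1) = g(1)^{-1} g(1) = 1$. Therefore $f \in \mathcal{F}$ has three monomials.

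There is essentially no obstacle here: the content of \ref{Lem:New1} is just a change of normalization, so the nontrivial part of the story lies in the earlier \ref{Lem:New100} (ruling out two monomials) and in deciding whether a three-monomial $g \in \mathcal{G}$ exists at all, rather than in moving between $\mathcal{F}$ and $\mathcal{G}$. The reason this reformulation is worth stating is that in $\mathcal{G}$ one has freedom to work projectively with the coefficients, which will be more convenient for the subsequent existence analysis.
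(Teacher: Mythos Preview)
Your proof is correct and follows exactly the same approach as the paper: the forward direction is the trivial inclusion $\mathcal{F}\subseteq\mathcal{G}$, and the backward direction is the rescaling $f(X)=g(X)/g(1)$. (One minor remark: in your final paragraph you refer to the present proposition as \ref{Lem:New1}, but in the paper that label points to a different result; the proposition you are proving is \ref{Lem:New2}.)
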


\begin{proof}
The forward implication is trivial, since $\mathcal{F}\subseteq
\mathcal{G}$. Let $g\in \mathcal{G}$ have exactly three monomials.
Then $f(X)=g(X)/g(1)\in \mathcal{F}$ contains the same number of
monomials as $g(X)$, namely three.
\end{proof}

By \ref{Lem:New2}, finding an $S_{m}$-decoding polynomial with
exactly three monomials is equivalent to finding a polynomial
$g(X)\in \mathcal{G}$ with exactly three monomials. Let $g(X)\in
\mathcal{G}$ be such a polynomial. Since $\mathcal{G}$ is closed
under  multiplication by elements of $\mathbb{F}_{2^{t}}\setminus\{0\}$, we
may suppose, without loss of generality,
that $g(X)=X^{u}+aX^{v}+b\in \mathbb{F}_{2^{t}}[X]$ for some
distinct  $u,v\in \mathbb{Z}_{m}\setminus\{0\}$
(only $g(1)$, $g(\gamma_{m})$, $g(\gamma_{m}^{s_{01}})$ and
$g(\gamma_{m}^{s_{10}})$ are concerned)
 and $a,b\in
\mathbb{F}_{2^{t}}\setminus\{0\}$. By the definition of
$\mathcal{G}$, the following conditions hold simultaneously:
\begin{align}
\label{Equ:New1}
\begin{pmatrix}
           \gamma_{m}^{us_{01}}   & \gamma_{m}^{vs_{01}}  & 1 \\
             \gamma_{m}^{us_{10}} & \gamma_{m}^{vs_{10}}  & 1 \\
             \gamma_{m}^{u} & \gamma_{m}^{v} & 1 \\
\end{pmatrix}
\begin{pmatrix}
1 \\
a \\
b \\
\end{pmatrix}
&=
\begin{pmatrix}
0 \\
0 \\
0 \\
\end{pmatrix}, \\
\label{Equ:New2}
1+a+b &\not= 0.
\end{align}
Conditions \ref{Equ:New1} and \ref{Equ:New2} shed much light on how to determine
elements of $\mathbb{M}_{2}$. A
computer search based on these conditions
shows  that the Mersenne numbers $M_{11}=2^{11}-1=2047$
and $M_{23}=2^{23}-1=8388607$  both belong to $\mathbb{M}_{2}$
(see Table 3.1
for the corresponding $S_{m}$-decoding polynomials).

\begin{table}
\label{Tab:NewMembers1}
\small
\begin{tabular}{|c|c|c|}
\hline
$m$ & $M_{11}=2^{11}-1=2047$ & $M_{23} = 2^{23}-1 = 8388607$  \\
$\mathbb{F}_{2^t}$ & $\mathbb{F}_{2^{11}}=\mathbb{F}_2[\gamma]/(\gamma^{11}+\gamma^2+1)$ & $\mathbb{F}_{2^{23}}=\mathbb{F}_2[\gamma]/(\gamma^{23}+\gamma^5+1)$ \\
$S_m$ & $\{s_{11}=1,s_{01}=713,s_{10}=1335\}$ & $\{s_{11}=1, s_{01} = 5711393, s_{10}= 2677215\}$ \\
$f(X)$ & $\gamma^{1485}X^{29} + \gamma^{694}X^{27} + \gamma^{118}$ & $\gamma^{6526329}X^{3526}+\gamma^{7574532}X^{3363}+\gamma^{2861754}$  \\
\hline
\end{tabular}
\caption{New elements $m$ determined to be in $\mathbb{M}_{2}$}
\end{table}

\ref{Thm:Itoh1} shows that the more numbers in $\mathbb{M}_{2}$ we find, the more
improvements we get on the query complexity within Efremenko's framework. This motivates
the consideration of numbers taking the form of $M_{11}$ and $M_{23}$, and to
understand why they
yield better local decoding algorithms within Efremenko's framework. We note that
$M_{11}$ and $M_{23}$ are both Mersenne numbers and each a product of two primes.
This begs the question: do all numbers of this form belong to $\mathbb{M}_{2}$, and do they
intrinsically yield better local decoding algorithms in Efremenko's framework?
For the remaining of this section, we provide an affirmative answer to this question.
More precisely, we prove the following theorem.

\begin{theorem}
\label{Thm:New2}
Let $m=2^{t}-1=pq$ be a Mersenne number, where $t$, $p$ and $q$ are primes.
Then $m\in \mathbb{M}_{2}$.
\end{theorem}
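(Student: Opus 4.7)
The plan is to produce an explicit trinomial $g(X) = X^u + aX^v + b \in \mathcal{F}$ with $u,v \in \mathbb{Z}_m\setminus\{0\}$, $u\neq v$, and $a, b\in \mathbb{F}_{2^t}^*$; by \ref{Lem:New2} this yields the required 3-query LDC, whence $m \in \mathbb{M}_{2}$.

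I would first exploit the Mersenne hypothesis. Since $m=2^t-1$, the primitive $m$-th root $\gamma_m$ is in fact a generator of the full cyclic group $\mathbb{F}_{2^t}^*$. Because $t$, $p$, $q$ are all prime and $p,q\mid 2^t-1$, the multiplicative order of $2$ modulo each of $p$ and $q$ equals $t$, so every cyclotomic coset of $2$ modulo $m$ other than $\{0\}$ has size $t$. Set $\alpha=\gamma_m^{s_{10}}$ and $\beta=\gamma_m^{s_{01}}$, so that $\alpha$ is a primitive $p$-th root of unity, $\beta$ is a primitive $q$-th root of unity, and $\alpha\beta=\gamma_m$. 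Via CRT, $\mathbb{F}_{2^t}^*\cong\mu_p\times\mu_q$, with $\gamma_m^{ks_{01}}=\beta^k$ and $\gamma_m^{ks_{10}}=\alpha^k$. Writing $X=\alpha^u$, $Y=\beta^u$, $X'=\alpha^v$, $Y'=\beta^v$, the three vanishing conditions $g(\gamma_m)=g(\gamma_m^{s_{01}})=g(\gamma_m^{s_{10}})=0$ become
\begin{equation*}
XY+aX'Y'+b=0,\qquad Y+aY'+b=0,\qquad X+aX'+b=0.
\end{equation*}

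Next I would eliminate $a$ and $b$; this gives a single consistency condition which, after a short characteristic-$2$ simplification, factors as
\begin{equation*}
(X+X')\,Y\,(Y'+1) \;+\; (Y+Y')\,X\,(X'+1) \;=\; 0.
\end{equation*}
A direct case analysis (splitting on whether $X=X'$ or $Y=Y'$) shows that the resulting $g$ satisfies the obstruction $g(1)=1+a+b=0$ precisely in the "bad" configurations where $X=X'=1$ (both $u$ and $v$ multiples of $p$) or $Y=Y'=1$ (both multiples of $q$). Hence it suffices to produce a pair $(u,v)$ solving the consistency condition outside these bad sets. Solving the original system for $v$ in terms of $u$ gives the explicit formulas
\begin{equation*}
\alpha^v=\frac{1+XY}{Y+1},\qquad \beta^v=\frac{1+XY}{X+1},
\end{equation*}
so a valid $v\in\mathbb{Z}_m$ exists precisely when $(1+XY)/(Y+1)\in\mu_p$ and $(1+XY)/(X+1)\in\mu_q$ simultaneously.

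The heart of the proof — and the main obstacle I anticipate — is showing that such a $u$ always exists. Parametrizing $(1+XY)/(Y+1)=\omega\in\mu_p$ and substituting into the second membership condition reduces the two memberships to a single compatibility constraint between additive relations among elements of $\mu_p$ and $\mu_q$ inside $\mathbb{F}_{2^t}$. Because $\gamma_m$ is a primitive element of $\mathbb{F}_{2^t}^*$ and $\mathbb{F}_{2^t}^*=\mu_p\cdot\mu_q$ is a direct product decomposition, the Mersenne hypothesis — which forces $\mathrm{ord}_p(2)=\mathrm{ord}_q(2)=t$ and thereby makes the Frobenius action on cyclotomic cosets perfectly uniform — should force this compatibility to be satisfiable. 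I expect the proof to establish existence either via an explicit construction (using, for example, a Zech-logarithm identity $\gamma_m^{\sigma(k)}=\gamma_m^k+1$ combined with the CRT compatibilities $\sigma(u)\equiv\sigma(us_{01})\pmod q$ and $\sigma(u)\equiv\sigma(us_{10})\pmod p$) or via a Galois-equivariant counting argument exploiting the uniform coset size. Once a valid $u$ is located, the displayed formulas determine $v$, the non-degeneracy ($u,v\neq 0$ and $(u,v)$ outside the bad sets) is straightforward to verify, and rescaling by the constant term produces the desired $S_m$-decoding polynomial in $\mathcal{F}$.
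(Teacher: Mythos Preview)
Your reduction is essentially the paper's Proposition~\ref{Lem:New1}: the consistency condition you derive is exactly the statement that the multiset
\[
\mathcal{Z}=\left\{\frac{z_{1}+z_{2}}{z_{1}z_{2}+z_{2}}:\ z_{1}\in\mu_p\setminus\{1\},\ z_{2}\in\mu_q\setminus\{1\}\right\}
\]
has a repeated element, and your verification that the resulting trinomial lies in $\mathcal{G}$ (nonzero $a,b$, and $g(1)\neq 0$) matches the paper's argument there. So the setup is correct and on the same track.

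The gap is precisely where you say it is. You write that the existence of a good $u$ ``should'' follow from the Mersenne hypothesis and that you ``expect'' either an explicit Zech-logarithm identity or a Galois-equivariant count to work, but neither is carried out, and neither is how the paper proceeds. The paper's argument (Proposition~\ref{Lem:New3}) is non-constructive and uses character theory on the group ring $\mathbb{Z}[\mathbb{F}_{2^t}^*]$. Assume $\mathcal{Z}$ has no repetition; then $|\mathcal{Z}|=(p-1)(q-1)$, and one checks that with $A=\{1+z_1:z_1\in\mu_p\setminus\{1\}\}$ and $B=\{1+z_2:z_2\in\mu_q\setminus\{1\}\}$ one gets the exact group-ring factorization
\[
(A+1_G)^{(-1)}(B+1_G)=G.
\]
Applying a character $\chi_p$ of order $p$ to this identity forces $\chi_p(A+1_G)=0$, which (via an integral-basis argument in $\mathbb{Z}[\theta_p]$) shows that $i\mapsto \log_{\gamma_p}(1+\gamma_p^i)$ is a permutation of $\mathbb{Z}_p^*$ modulo $p$. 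Then applying a character $\chi_m$ of order $m$ to the same identity and using the integral basis of $\mathbb{Z}[\theta_p,\theta_q]$ over $\mathbb{Z}[\theta_q]$ forces $1+\gamma_p^i\in\mu_p$ for every $i$, so $\{0\}\cup\mu_p$ is closed under addition and hence is a subfield of $\mathbb{F}_{2^t}$. Since $t$ is prime, this subfield must be $\mathbb{F}_2$ or $\mathbb{F}_{2^t}$, giving $p=1$ or $q=1$ --- the contradiction. This is where the primality of $t$ is actually used, and it is not a counting or Frobenius-uniformity argument; your speculated routes do not obviously reach this subfield obstruction.
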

The proof of \ref{Thm:New2} is  based on  analysis
of conditions \ref{Equ:New1} and \ref{Equ:New2}, and is an easy consequence of
Propositions \bare\ref{Lem:New1} and \bare\ref{Lem:New3} below.

\begin{proposition}
\label{Lem:New1}
Let $m=pq$ be the product of two distinct odd primes $p$ and $q$. Let
$t$ be the multiplicative order of $2\in\mathbb{Z}_{m}^{*}$, and let
$\gamma_{m}\in \mathbb{F}_{2^{t}}^{*}$ be a primitive $m$-th root of
unity. Define
\begin{equation}
\label{Equ:New4}
\mathcal{Z}=\left\{\frac{z_{1}+z_{2}}{z_{1}z_{2}+z_{2}}: \text{
$z_{1},z_{2}\in \mathbb{F}_{2^{t}}^{*}$,
${\rm ord}(z_{1})=p$, and ${\rm ord}(z_{2})=q$}\right\}.
\end{equation}
If $\mathcal{Z}$ is a multiset containing an element of multiplicity greater than
one, then $m\in\mathbb{M}_{2}$.
\end{proposition}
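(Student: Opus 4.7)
My plan is to convert the repeated-value hypothesis into the existence of a three-monomial polynomial $g \in \mathcal{G}$. By \ref{Lem:New2}, this yields a three-monomial $S_m$-decoding polynomial $f \in \mathcal{F}$; combined with the superpolynomial $S_m$-matching family of \ref{Lem:Efr} and the construction in \ref{Thm:Efr1}, this produces a $3$-query linear LDC of length $N_2$, and hence $m \in \mathbb{M}_2$.

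I first introduce the coordinates $\eta_1 := \gamma_m^{s_{10}}$ and $\eta_2 := \gamma_m^{s_{01}}$. Because $s_{10} \equiv (1,0)$ and $s_{01} \equiv (0,1)$ modulo $(p,q)$, the Chinese Remainder Theorem shows $\eta_1$ has order $p$ and $\eta_2$ has order $q$, while $s_{10}+s_{01} \equiv 1 \pmod m$ gives $\gamma_m = \eta_1 \eta_2$. Given a $\phi$-repeat $(z_1,z_2) \ne (z_1',z_2')$, I write $z_1 = \eta_1^{u_1}, z_1' = \eta_1^{v_1}$ with $u_1,v_1 \in \mathbb{Z}_p^\ast$ and $z_2 = \eta_2^{u_2}, z_2' = \eta_2^{v_2}$ with $u_2,v_2 \in \mathbb{Z}_q^\ast$, then apply CRT to lift to distinct $u,v \in \mathbb{Z}_m^\ast$. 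Now define
\[
a := \frac{z_1+z_2}{z_1'+z_2'}, \qquad b := z_1 + a\,z_1', \qquad g(X) := X^u + a X^v + b.
\]
Both $z_1+z_2$ and $z_1'+z_2'$ are nonzero (an order-$p$ element cannot equal an order-$q$ element, since that would force a common order dividing $\gcd(p,q)=1$), so $a \ne 0$. A short characteristic-two computation gives $b = (z_1 z_2' + z_1' z_2)/(z_1'+z_2')$; $b=0$ would force $z_1/z_1' = z_2/z_2'$, another order-$p$-equals-order-$q$ clash contradicting distinctness. Hence $g$ has exactly three monomials.

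The core algebraic step is to verify $g(\gamma_m) = g(\gamma_m^{s_{01}}) = g(\gamma_m^{s_{10}}) = 0$. Expanding each via $\gamma_m^u = z_1 z_2$, $\gamma_m^{s_{01} u} = \eta_2^u = z_2$, and $\gamma_m^{s_{10} u} = \eta_1^u = z_1$ (and analogously with $v$) produces three linear relations in $(1,a,b)$; their pairwise differences reduce in characteristic two to two identities, the first being the defining equation for $a$ and the second being precisely the hypothesis $\phi(z_1,z_2)=\phi(z_1',z_2')$ after clearing denominators. Together with the defining choice of $b$, all three evaluations vanish. The remaining requirement for $g \in \mathcal{G}$ is $g(1) = 1+a+b \ne 0$, and this is what I expect to be the main obstacle: algebraically $g(1)=0$ translates to the auxiliary identity $(z_1+z_1')(1+z_2') = (z_2+z_2')(1+z_1')$, which is \emph{a priori} independent of the $\phi$-equality. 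The plan to dispose of it is to exploit the freedom within the $\phi$-fiber (of size at least two by hypothesis): a M\"obius-injectivity check shows two pairs in the same fiber must differ in both coordinates, and a short counting or Frobenius-orbit argument within the fiber then produces a representative pair for which the auxiliary identity fails. Once $g(1) \ne 0$ is secured, $g \in \mathcal{G}$ has three monomials, and the preceding chain of results completes the proof.
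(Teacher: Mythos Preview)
Your construction of $g(X)=X^u+aX^v+b$ and the verification of the three vanishing conditions follow the same route as the paper: the paper phrases it as ``the $\phi$-collision forces $\det\Gamma_{u,v}=0$, the rank is exactly $2$, hence the system \eqref{Equ:New1} has a unique solution $(a,b)$ with $a,b\neq 0$,'' but this is exactly the $(a,b)$ you wrote down explicitly.

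The genuine gap is your treatment of $g(1)\neq 0$. You correctly isolate it as the remaining obstruction, but you do not prove it: you propose to search inside the $\phi$-fiber for a favorable representative pair via an unspecified ``counting or Frobenius-orbit argument.'' That search is both uncompleted and unnecessary, because $g(1)\neq 0$ holds for \emph{every} distinct pair in the fiber, by a computation you have essentially already set up. You noted that $g(1)=0$ amounts to
\[
a \;=\; \frac{z_1+z_2}{z_1'+z_2'} \;=\; \frac{1+z_1}{1+z_1'}\,.
\]
On the other hand the $\phi$-equality, after factoring $z_2,z_2'$ out of the denominators, reads
\[
\frac{z_1+z_2}{z_1'+z_2'} \;=\; \frac{z_2(1+z_1)}{z_2'(1+z_1')}\,.
\]
Comparing and cancelling the nonzero factor $(1+z_1)/(1+z_1')$ forces $z_2=z_2'$, and then your own M\"obius-injectivity observation gives $z_1=z_1'$, contradicting $(z_1,z_2)\neq(z_1',z_2')$. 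The paper reaches the same contradiction via linear algebra: if $g(1)=0$ then the row $(1,1,1)$ lies in the row space of $\Gamma_{u,v}$, which yields \eqref{Equ:New13}; combining \eqref{Equ:New13} with the collision identity (xiii) gives $\gamma_m^{(u-v)s_{01}}=\gamma_m^{(u-v)s_{10}}$, hence $m\mid(u-v)$ and $u=v$. Either way, no fiber-search is needed.
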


\begin{proof}
Suppose $\mathcal{Z}$ contains an element
of multiplicity greater than one. Then there exist  $z_{1},z_{2},z'_{1}, z'_{2}\in
\mathbb{F}_{2^{t}}^{*}$ such that the following hold:
\begin{enumerate}[(i)]
\label{Equ:New7}
\item ${\rm ord}(z_{1})={\rm ord}(z'_{1})=p$,
\item ${\rm ord}(z_{2})={\rm ord}(z'_{2})=q$,
\item $(z_{1},z_{2}) \neq (z'_{1},z'_{2})$,
\item $\dfrac{z_{1}+z_{2}}{z_{1}z_{2}+z_{2}} = \dfrac{z'_{1}+z'_{2}}{z'_{1}z'_{2}+z'_{2}}$.
\end{enumerate}
Obviously, we have
${\rm ord}(\gamma_{m}^{s_{10}})=p$ and ${\rm ord}(\gamma_{m}^{s_{01}})=q$. It follows that there are
integers $u_{1},v_{1}\in \mathbb{Z}_{p}\setminus\{0\}$ and $u_{2},v_{2}\in
\mathbb{Z}_{q}\setminus\{0\}$ such that the following hold:
\begin{enumerate}[(i)]
\setcounter{enumi}{4}
\label{Equ:New8}
\item $z_{1}=(\gamma_{m}^{s_{10}})^{u_{1}}=\gamma_{m}^{u_{1}s_{10}}$,
\item $z_{2}=\gamma_{m}^{u_{2}s_{01}}$,
\item $z'_{1}=\gamma_{m}^{v_{1}s_{10}}$,
\item $z'_{2}=\gamma_{m}^{v_{2}s_{01}}$.
\end{enumerate}
Since $p$ and $q$ are distinct primes, the Chinese Remainder Theorem implies that there are
unique numbers $u,v\in
\mathbb{Z}_{m}\setminus\{0\}$ such that
\begin{enumerate}[(i)]
\setcounter{enumi}{8}
\label{Equ:New9}
\item $u\equiv u_{1}\bmod{p}$ and $u\equiv u_{2}\bmod{q}$,
\item $v\equiv v_{1}\bmod{p}$ and $v\equiv v_{2}\bmod{q}$.
\end{enumerate}
Combing the set of conditions (i)--(x),
it is easy to verify that the numbers $u,v\in\mathbb{Z}_{m}\setminus\{0\}$ satisfy the
following conditions
\begin{enumerate}[(i)]
\setcounter{enumi}{10}
\label{Equ:New10}
\item $z_{1}=\gamma_{m}^{us_{10}}$, $ z_{2}=\gamma_{m}^{us_{01}}$,
$z'_{1}=\gamma_{m}^{vs_{10}}$, and $z'_{2}=\gamma_{m}^{vs_{01}}$,
\item $u\neq v$,
\item $\dfrac{\gamma_{m}^{u}+\gamma_{m}^{us_{01}}}{\gamma_{m}^{u}+
\gamma_{m}^{us_{10}}}=\dfrac{\gamma_{m}^{v}+\gamma_{m}^{vs_{01}}}{\gamma_{m}^{v}+\gamma_{m}^{vs_{10}}}$.
\end{enumerate}
The last condition (xiii) implies that
the  matrix
\begin{equation}
\label{Equ:New3}
\Gamma_{u,v}=
\begin{pmatrix}
           \gamma_{m}^{us_{01}}   & \gamma_{m}^{vs_{01}}  & 1 \\
             \gamma_{m}^{us_{10}} & \gamma_{m}^{vs_{10}}  & 1 \\
             \gamma_{m}^{u} & \gamma_{m}^{v} & 1 \\
\end{pmatrix}
\end{equation}
has determinant zero.
It follows that
${\rm rank}(\Gamma_{u,v})=1$ or 2. If ${\rm rank}(\Gamma_{u,v})=1$, then the rank of
\begin{equation*}
\begin{pmatrix}
           \gamma_{m}^{us_{01}}+\gamma_{m}^{u}   & \gamma_{m}^{vs_{01}}+\gamma_{m}^{v}  & 0 \\
             \gamma_{m}^{us_{10}}+ \gamma_{m}^{u} &\gamma_{m}^{vs_{10}}+\gamma_{m}^{v}  & 0 \\
             \gamma_{m}^{u}& \gamma_{m}^{v} & 1 \\
\end{pmatrix}
\end{equation*}
is also 1. Hence,
$\gamma_{m}^{us_{01}}+\gamma_{m}^{u}=\gamma_{m}^{vs_{01}}+\gamma_{m}^{v}=\gamma_{m}^{us_{10}}+
\gamma_{m}^{u}=\gamma_{m}^{vs_{10}}+\gamma_{m}^{v}=0$, which in turn implies
 $\gamma_{m}^{us_{01}}=\gamma_{m}^{us_{10}}$
 and
$\gamma_{m}^{vs_{01}}=\gamma_{m}^{vs_{10}}$. Since $\gamma_{m}$ is of order $m$
and $\gcd(m,s_{01}-s_{10})=1$, we have
$m|\gcd(u(s_{01}-s_{10}), v(s_{01}-s_{10}))$ and
therefore
$m|\gcd(u,v)$, which contradicts the fact that $u,v\in
\mathbb{Z}_{m}\setminus \{0\}$.
Consequently, ${\rm rank}(\Gamma_{u,v})=2$ and
the equation \ref{Equ:New1} has a unique solution $(a,b)\in \mathbb{F}_{2^{t}}^{2}$.

Next we show that both $a$ and $b$ are nonzero.  If $a=0$, then
$b=\gamma_{m}^{us_{01}}=\gamma_{m}^{us_{10}}=\gamma_{m}^{u}$, which
implies that $u\equiv 0\bmod{m}$. If $b=0$, then
$a=\gamma_{m}^{(u-v)s_{01}}=\gamma_{m}^{(u-v)s_{10}}=\gamma_{m}^{u-v}$,
which implies that $u\equiv v\bmod{m}$. Both cases yield
contradictions, since $u,v\in \mathbb{Z}_{m}\setminus \{0\}$ are distinct.

Let $g(X)=X^{u}+aX^{v}+b\in \mathbb{F}_{2^{t}}[X]$. Then $g(X)$
contains three monomials since $u,v\in \mathbb{Z}_{m}\setminus\{0\}$ are distinct
and $a,b\in \mathbb{F}_{2^{t}}\setminus\{0\}$.
Furthermore, we have
$g(\gamma_{m})=g(\gamma_{m}^{s_{01}})=g(\gamma_{m}^{s_{10}})=0$ since $(a,b)$
satisfies \ref{Equ:New1}.

As the last step, we claim that $g(1)\neq 0$, for otherwise
the vector $(1,1,1)$ is necessarily
a linear combination of the rows of $\Gamma_{u,v}$, since
$(1,a,b)\neq (0,0,0)$, and thereby
\begin{equation*}
\begin{pmatrix}
           \gamma_{m}^{us_{01}}   & \gamma_{m}^{vs_{01}}  & 1 \\
             \gamma_{m}^{us_{10}} & \gamma_{m}^{vs_{10}}  & 1 \\
             \gamma_{m}^{u} & \gamma_{m}^{v} & 1 \\
                                 1 & 1 & 1 \\
\end{pmatrix}
\end{equation*}
has rank two.
Applying elementary row operations (adding the third row to each of
the first three rows) to the above matrix gives
 \begin{equation}
 \label{Equ:New13}
\frac{1+\gamma_{m}^{u}}{1+\gamma_{m}^{v}}
=\frac{1+\gamma_{m}^{us_{10}}}{1+\gamma_{m}^{vs_{10}}}=
\frac{1+\gamma_{m}^{us_{01}}}{1+\gamma_{m}^{vs_{01}}}.
\end{equation}
Condition (xiii) and \ref{Equ:New13} now jointly yield
$\gamma_{m}^{(u-v)s_{01}}=\gamma_{m}^{(u-v)s_{10}}$, which in turn implies that
 $u=v$. This is a contradiction.

We have actually shown that
$g(X)\in \mathcal{G}$ and contains exactly three monomials.
By \ref{Lem:New2}, there is an
$S_{m}$-decoding polynomial $f(X)\in \mathcal{F}$ which also contains exactly three
monomials. Hence, $m\in \mathbb{M}_{2}$.
\end{proof}

\begin{proposition}
 \label{Lem:New3}
Let $m=2^{t}-1=pq$ be a Mersenne number, where $t$, $p$ and $q$ are all
primes, $p\not=q$. Then $\mathcal{Z}$ (as defined in \ref{Lem:New1}) is a multiset
containing an element of multiplicity greater than one.
\end{proposition}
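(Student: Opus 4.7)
The plan is to reformulate the problem as an injectivity question for a much simpler map. Setting $F(u,v):=(u+v)/(v(u+1))$, the multiset $\mathcal{Z}$ equals $\{F(u,v):(u,v)\in D_1\}$ with $D_1:=(H_p\setminus\{1\})\times(H_q\setminus\{1\})$; we must show $F$ is not injective on $D_1$. First I would verify the identity $F(u,v)+1 = \alpha(u)\beta(v)$, with $\alpha(w):=w/(w+1)$ and $\beta(w):=(w+1)/w$. Parametrising a potential collision by $(u',v') = (\lambda u,\mu v)$, short manipulations in characteristic $2$ turn $F(u,v)=F(u',v')$ into
\[
\frac{u+1}{v+1} = \frac{\mu(\lambda+1)}{\lambda(\mu+1)},
\]
with both $\lambda\in H_p\setminus\{1\}$ and $\mu\in H_q\setminus\{1\}$ forced (the cases $\lambda=1$ or $\mu=1$ push $u$ or $v$ to $1$).

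The key symmetry, checked directly, is $\mu(\lambda+1)/(\lambda(\mu+1)) = (\lambda^{-1}+1)/(\mu^{-1}+1)$. Since inversion is a bijection on $H_p\setminus\{1\}$ and $H_q\setminus\{1\}$, the right-hand side, as $(\lambda,\mu)$ varies over $D_1$, traces the same multiset of values as the simpler map $\Phi(u,v):=(u+1)/(v+1)$ on $D_1$. Put $N(c):=|\Phi^{-1}(c)|$. The degenerate choice $(\lambda,\mu)=(u^{-1},v^{-1})$ produces the invalid $(u',v')=(1,1)\notin D_1$, and these account for exactly $(p-1)(q-1)$ spurious solutions of the collision equation. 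Combining this with the above symmetry, the number of valid non-trivial $F$-collisions equals $\sum_c N(c)^2-(p-1)(q-1)$. Since $\sum_c N(c)=(p-1)(q-1)$, this quantity is positive iff some $N(c)\ge 2$. Hence the proposition reduces to showing that $\Phi$ is not injective on $D_1$.

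The image of $\Phi$ is $PQ^{-1}$, where $P:=(H_p+1)\setminus\{0\}$ and $Q:=(H_q+1)\setminus\{0\}$ have sizes $p-1$ and $q-1$. Under the Mersenne assumption $m=2^t-1$, we have $|\mathbb{F}_{2^t}^*|=pq$ exactly with $\mathbb{F}_{2^t}^*=H_p\times H_q$, so $P$, $Q$, $\{1\}$ are pairwise disjoint subsets of $\mathbb{F}_{2^t}^*$; moreover, $P$ and $Q$ are Frobenius-invariant and (since $t$ is prime, with both $p-1$ and $q-1$ divisible by $t$) decompose into Frobenius orbits all of size $t$. The final and hardest step is to exhibit a genuine collision $(u+1)(v'+1)=(u'+1)(v+1)$ with $u\ne u'\in H_p\setminus\{1\}$ and $v\ne v'\in H_q\setminus\{1\}$. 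The crude cardinality comparison $(p-1)(q-1)<pq$ does not suffice, so the Mersenne-specific structure must enter essentially. My plan is to attempt either a Chinese-Remainder-Theorem-based explicit construction (writing $u=\gamma_m^{as_{10}}$, $v=\gamma_m^{bs_{01}}$ and solving the collision equation componentwise modulo $p$ and modulo $q$), or a character-sum estimate of $\sum_{(u,v)\in D_1}\chi(\Phi(u,v))$ over multiplicative characters $\chi$ of order dividing $p$ or $q$, which should force $|PQ^{-1}|<(p-1)(q-1)$. Pinning down the Mersenne-specific mechanism that prevents injectivity of $\Phi$ is the principal obstacle.
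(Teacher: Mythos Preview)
Your reduction is sound and in fact lands exactly where the paper's proof starts: your map $\Phi(u,v)=(u+1)/(v+1)$ is (up to taking reciprocals) the paper's map whose image is the set $S$, and your statement ``$\Phi$ is not injective on $D_1$'' is precisely the paper's statement ``$|S|<(p-1)(q-1)$''. So the first two-thirds of your write-up reproduce the paper's preliminary reformulation, but you have stopped at the point where the actual argument begins. You explicitly say the final step is the ``principal obstacle'' and only sketch two possible attacks; neither is carried out, and the crude cardinality bound you note is indeed insufficient. As it stands, the proposal is incomplete.

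Here is the mechanism you are missing, and it is exactly the ``Mersenne-specific'' one you anticipated. Assume $\Phi$ injective. Then $S=\Phi(D_1)$, $A^{-1}:=\{(1+z_1)^{-1}:z_1\in H_p\setminus\{1\}\}$, $B:=\{1+z_2:z_2\in H_q\setminus\{1\}\}$ and $\{1\}$ are pairwise disjoint and their sizes add to $pq=|G|$, so they partition $G=\mathbb{F}_{2^t}^*$. In the group ring $\mathbb{Z}[G]$ this reads $(A+1_G)^{(-1)}(B+1_G)=G$. Apply a multiplicative character $\chi_p$ of order $p$: since $\chi_p(G)=0$, one factor vanishes, and a short argument in $\mathbb{Z}[\theta_p]$ (using that $(1-\theta_p)$ is prime and $\gcd(p,q)=1$) forces $\chi_p(A+1_G)=0$. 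Writing $1+\gamma_p^i=\gamma_p^{a(i)}\gamma_q^{b(i)}$, this vanishing says $\sum_i \theta_p^{a(i)}+1=0$, so $a$ is a \emph{permutation} of $\mathbb{Z}_p^*$. Now apply a character $\chi_m$ of order $m=pq$: again $\chi_m(G)=0$, so $\chi_m(A+1_G)=0$ or $\chi_m(B+1_G)=0$. In the first case $\sum_i \theta_p^{a(i)}(\theta_q^{b(i)}-1)=0$; since $\{\theta_p,\ldots,\theta_p^{p-1}\}$ is a $\mathbb{Z}[\theta_q]$-basis, every $b(i)=0$, hence $1+\gamma_p^i=\gamma_p^{a(i)}$ for all $i$. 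This means $\{0\}\cup H_p$ is closed under addition, i.e.\ a subfield of $\mathbb{F}_{2^t}$ of order $p+1$. Because $t$ is prime, the only subfields are $\mathbb{F}_2$ and $\mathbb{F}_{2^t}$, forcing $p=1$ or $q=1$ --- contradiction. The case $\chi_m(B+1_G)=0$ is symmetric. This subfield obstruction is the heart of the proof and is exactly where the primality of $t$ enters.
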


\begin{proof}
Obviously, $\mathcal{Z}$ has at most $(p-1)(q-1)$ distinct elements.
Suppose $\mathcal{Z}$ is a set of cardinality $(p-1)(q-1)$. For
every $z_{1},z_{2}\in \mathbb{F}_{2^{t}}^{*}$ such that
${\rm ord}(z_{1})=p$ and ${\rm ord}(z_{2})=q$, we have
$(z_{1}+z_{2})/(z_{1}z_{2}+z_{2})=1+(1+z_{2}^{-1})/(1+z_{1}^{-1})$.
Hence,
\begin{equation}
\label{Equ:New14}
S=\left\{(1+z_{2})/(1+z_{1}): \text{$z_{1},z_{2}\in \mathbb{F}_{2^{t}}^{*}$,
${\rm ord}(z_{1})=p$, and ${\rm ord}(z_{2})=q$}\right\}
\end{equation}
is also a set of
cardinality $(p-1)(q-1)$. Let $G=\mathbb{F}_{2^{t}}^{*}$  and
$1_{G}$  its identity. Consider the group ring $\mathbb{Z}[G]$.  We
identify the two subsets of $G$,
\begin{align}
\label{Equ:New15}
A &=\{1+z_{1}: \text{$z_{1}\in\mathbb{F}_{2^{t}}^{*}$ and ${\rm ord}(z_{1})=p$}\}, \\
B &=\{1+z_{2}: \text{$z_{2}\in\mathbb{F}_{2^{t}}^{*}$ and ${\rm ord}(z_{2})=q$}\},
\end{align}
with two elements of $\mathbb{Z}[G]$.

We claim that
\begin{equation}\label{Equ:New16}
S\cup A^{(-1)}\cup B\cup\{1_{G}\}=G.
\end{equation}
Indeed,
since $S\cup A^{(-1)}\cup B\cup\{1_{G}\}\subseteq G$ and
$|S|+|A^{-1}|+|B|+|\{1_{G}\}|=|G|$, it suffices to show that
$S$, $A^{(-1)}$, $B$, and $\{1_{G}\}$ are pairwise disjoint. It is obvious that
$1_{G}\notin S\cup A^{(-1)}\cup B$.
If $S\cap A^{(-1)}\neq \varnothing$, then there exist
$z_{1},z'_{1},z_{2}\in \mathbb{F}_{2^{t}}^{*}$ such that
$(1+z_{2})/(1+z_{1})=1/(1+z'_{1})$, where
${\rm ord}(z_{1})={\rm ord}(z'_{1})=p$ and ${\rm ord}(z_{2})=q$. It follows that
$(1+z_{2}^{2})/(1+z_{1})=(1+z_{2})/(1+z'_{1})$, which contradicts
our assumption that $S$ is a set of cardinality $(p-1)(q-1)$.
Similarly, we have $S\cap B=A^{(-1)}\cap B=\varnothing$.

From \ref{Equ:New16} we derive
\begin{equation}\label{Equ:New17}
(A+1_{G})^{(-1)}(B+1_{G})=G.
\end{equation}
Let $\gamma_{p}, \gamma_{q}\in G$ be some primitive
$p$-th and $q$-th roots of unity, respectively. We claim that
there exist a permutation $a:\mathbb{Z}_{p}^{*}\rightarrow
\mathbb{Z}_{p}^{*}$ and a mapping $b:\mathbb{Z}_{p}^{*}\rightarrow
\mathbb{Z}_{q}$ such that for every $i\in \mathbb{Z}_{p}^{*}$,
\begin{equation}\label{Equ:New18}
1+\gamma_{p}^{i}=\gamma_{p}^{a(i)}\gamma_{q}^{b(i)}.
\end{equation}
Let $\theta_{p},\theta_{q}\in \mathbb{C}$ be some complex primitive
$p$-th and $q$-th roots of unity respectively, where $\mathbb{C}$ is
the field of complex numbers. Let $\chi_{p}$ be a multiplicative
character of order $p$ of the group $G$, such that
$\chi_{p}(\gamma_{p})=\theta_{p}$. The identity
$\chi_{p}((A+1_{G})^{(-1)})\chi_{p}(B+1_{G})=\chi_{p}(G)=0$ implies that
either $\chi_{p}((A+1_{G})^{(-1)})=0$ or
$\chi_{p}(B+1_{G})=0$.
If  $\chi_{p}(B+1_{G})=0$, then $q\equiv \chi_{p}(B+1_{G})\equiv
0\bmod{(1-\theta_{p})}$ and therefore $q\in
(1-\theta_{p})\mathbb{Z}[\theta_{p}]$. On the other hand,
$p=\Pi_{i=1}^{p-1}(1-\theta_{p}^{i})\in
(1-\theta_{p})\mathbb{Z}[\theta_{p}]$. Since $\gcd(p,q)=1$, there
are rational integers $\alpha,\beta$ such that $\alpha p+\beta
q=1$. It follows that $1\in (1-\theta_{p})\mathbb{Z}[\theta_{p}]$,
which contradicts the well-known fact that
$(1-\theta_{p})\mathbb{Z}[\theta_{p}]$ is a prime ideal in
$\mathbb{Z}[\theta_{p}]$ (cf. \citet[Lemma 1.4]{Washington:1997}).
Hence,
we have $\chi_{p}((A+1_{G})^{(-1)})=0$ and
$\chi_{p}(A+1_{G})=\overline{\chi_{p}((A+1_{G})^{(-1)})}=0$, giving
$\sum_{i=1}^{p-1}\chi_{p}(1+\gamma_{p}^{i})+1=0$. Clearly, there
is a mapping $a:\mathbb{Z}_{p}^{*}\rightarrow \mathbb{Z}_{p}$ such
that $\chi_{p}(1+\gamma_{p}^{i})=\theta_{p}^{a(i)}$ for all $i\in
\mathbb{Z}_{p}^{*}$. Hence, $\sum_{i=1}^{p-1}\theta_{p}^{a(i)}+1=0$.
Since any $p-1$ elements of $\{1,\theta_{p},\ldots,
\theta_{p}^{p-1}\}$ form an integral basis of $\mathbb{Z}[\theta_{p}]$
over $\mathbb{Z}$, $a$ must be a permutation of
$\mathbb{Z}_{p}^{*}$. Since
$G=\{\gamma_{p}^{\alpha}\gamma_{q}^{\beta}: \alpha \in
\mathbb{Z}_{p},\beta\in \mathbb{Z}_{q}\}$, there are two mappings
$\alpha: \mathbb{Z}_{p}^{*}\rightarrow \mathbb{Z}_{p}$ and
$\beta:\mathbb{Z}_{p}^{*}\rightarrow \mathbb{Z}_{q}$ such that
$1+\gamma_{p}^{i}=\gamma_{p}^{\alpha(i)}\gamma_{q}^{\beta(i)}$ for
all $i\in \mathbb{Z}_{p}^{*}$ . It follows that
$\theta_{p}^{a(i)}=\chi_{p}(1+\gamma_{p}^{i})=\chi_{p}(\gamma_{p}^{\alpha(i)})
\chi_{p}(\gamma_{q}^{\beta(i)})=\theta_{p}^{\alpha(i)}\chi_{p}(\gamma_{q})^{\beta(i)}$.
Obviously, $\chi_{p}(\gamma_{q})^{p}=\chi_{p}(\gamma_{q})^{q}=1$ and so
 $\chi_{p}(\gamma_{q})=1$. Therefore,
$\theta_{p}^{a(i)}=\theta_{p}^{\alpha(i)}$, which implies
$\alpha=a$. We identify $\beta$ with $b$ and obtain
\ref{Equ:New18}.

Similarly, there exist a permutation $c:\mathbb{Z}_{q}^{*}\rightarrow
\mathbb{Z}_{q}^{*}$ and a mapping $d:\mathbb{Z}_{q}^{*}\rightarrow
\mathbb{Z}_{p}$ such that, for every $j\in \mathbb{Z}_{q}^{*}$,
\begin{equation}
\label{Equ:New19}
1+\gamma_{q}^{j}=\gamma_{q}^{c(j)}\gamma_{p}^{d(j)}.
\end{equation}

Let $\chi_{m}$ be a multiplicative character of order $m$ of $G$.
Without loss of generality, we suppose that
$\chi_{m}(\gamma_{p})=\theta_{p}$ and
$\chi_{m}(\gamma_{q})=\theta_{q}$. Applying $\chi_{m}$ to \ref{Equ:New17},
we have $\chi_{m}((A+1_{G})^{(-1)})\chi_{m}(B+1_{G})=\chi_{m}(G)=0$,
which implies either $\chi_{m}(A+1_{G})=0$ or $\chi_{m}(B+1_{G})=0$.
If $\chi_{m}(A+1_{G})=0$, then
$0=\sum_{i=1}^{p-1}\chi_{m}(1+\gamma_{p}^{i})+1=\sum_{i=1}^{p-1}\theta_{p}^{a(i)}\theta_{q}^{b(i)}+1
=\sum_{i=1}^{p-1}\theta_{p}^{a(i)}(\theta_{q}^{b(i)}-1)$. Since
$\{\theta_{p},\ldots,\theta_{p}^{p-1}\}$ is an integral basis of
$\mathbb{Z}[\theta_{p}, \theta_{q}]$ over $\mathbb{Z}[\theta_{q}]$,
we have $\theta_{q}^{b(i)}-1=0$ for every $i\in \mathbb{Z}_{p}^{*}$.
It follows that $1+\gamma_{p}^{i}=\gamma_{p}^{a(i)}$ for every $i\in
\mathbb{Z}_{p}^{*}$. Hence, $\{0,1,\gamma_{p},\ldots,
\gamma_{p}^{p-1}\}$ is a subfield of $\mathbb{F}_{2^{t}}$. However,
the only subfields of $\mathbb{F}_{2^{t}}$ are $\mathbb{F}_{2}$
and $\mathbb{F}_{2^{t}}$. Hence, either $p+1=2$ or
$p+1=2^{t}$, that is, either $p=1$ or $q=1$, which is a
contradiction.

Similarly, if $\chi_{m}(B+1_{G})=0$, then we conclude that
$\{0,1,\gamma_{q},\ldots,\gamma_{q}^{q-1}\}$ is a subfield of
$\mathbb{F}_{2^{t}}$, which yields the same contradiction.

Hence, our assumption that ${\mathcal Z}$ is a set of cardinality $(p-1)(q-1)$
is wrong and the proposition is established.
\end{proof}

We are now ready to proof \ref{Thm:New2}.

\begin{namedproof}{Proof of \ref{Thm:New2}}
To apply Propositions \bare\ref{Lem:New1} and \bare\ref{Lem:New3}, we need to
show that $p$ and $q$ are odd and distinct. Since $pq=m=2^{t}-1$ is odd, it suffices to show that
$p$ and $q$ are distinct. Suppose $p=q$, then $pq\equiv p^{2}\equiv 1\bmod{4}$
and $pq\equiv m\equiv 2^{t}
-1 \equiv -1\bmod{4}$, which is a contradiction.
\end{namedproof}

\ref{Thm:New2} provides a general method of obtaining new numbers in
$\mathbb{M}_{2}$ and  motivates the following definition of a subset of $\mathbb{M}_{2}$:
\begin{equation*}
\mathbb{M}_{2,{\rm Mersenne}} =
\left\{m : m=2^{t}-1=pq, \text{ where $t$, $p$ and $q$ are primes}\right\}.
\end{equation*}
It is an interesting open problem to determine the
cardinality of $\mathbb{M}_{2,{\rm Mersenne}}$. A similar but much more
well-known problem in number theory is determining
the number of Mersenne primes. Although
it is generally believed that there are infinitely many Mersenne primes,
no proof or disproof is known. It seems that our question on the cardinality of
$\mathbb{M}_{2,\rm Mersenne}$ is also difficult to answer.
We have, however, determined 50 elements of $\mathbb{M}_{2,\rm Mersenne}$
by computer search.
These fifty numbers $M_{t}=2^{t}-1=pq\in\mathbb{M}_{2,{\rm Mersenne}}$
with their smaller prime divisors $p$ are
listed in Table 3.2.
\begin{sidewaystable}
\label{Tab:NewMembers2}
\resizebox{\textheight}{!}{
\begin{tabular}{r|l|r|l}
\toprule
$m$ & $p$ & $m$ & $p$ \\
\toprule
$M_{11}$ & 23 & $M_{373}$ & 25569151 \\
$M_{23}$ & 47 & $M_{379}$ & 180818808679 \\
$M_{37}$ & 223 & $M_{421}$ & 614002928307599 \\
$M_{41}$ & 13367 & $M_{457}$ & 150327409 \\
$M_{59}$ & 179951 & $M_{487}$ & 4871 \\
$M_{67}$ & 193707721 & $M_{523}$ & 160188778313202118610543685368878688932828701136501444932217468039063 \\
$M_{83}$ & 167 & $M_{727}$ & 176062917118154340379348818723316116707774911664453004727494494365756 \\\
& & & 22328171096762265466521858927 \\
$M_{97}$ & 11447 & $M_{809}$ & 4148386731260605647525186547488842396461625774241327567978137 \\
$M_{101}$ & 7432339208719 & $M_{881}$ & 26431 \\
$M_{103}$ & 2550183799 & $M_{971}$ & 23917104973173909566916321016011885041962486321502513 \\
$M_{109}$ & 745988807 & $M_{983}$ & 1808226257914551209964473260866417929207023 \\
$M_{131}$ & 263 & $M_{997}$ & 167560816514084819488737767976263150405095191554732902607 \\
$M_{137}$ & 32032215596496435569 & $M_{1063}$ &1485761479 \\
$M_{139}$ & 5625767248687 & $M_{1427}$ & 19054580564725546974193126830978590503
\\
$M_{149}$ & 86656268566282183151 & $M_{1487}$ & 24464753918382797416777 \\
$M_{167}$ & 2349023 & $M_{1637}$ & 81679753 \\
$M_{197}$ & 7487 & $M_{2927}$ & 1217183584262023230020873 \\
$M_{199}$ & 164504919713 & $M_{3079}$ & 25324846649810648887383180721 \\
$M_{227}$ & 26986333437777017 & $M_{3259}$ & 21926805872270062496819221124452121 \\
$M_{241}$ & 22000409 & $M_{3359}$ & 6719 \\
$M_{269}$ & 13822297 & $M_{4243}$ & 101833 \\
$M_{271}$ & 15242475217 & $M_{4729}$ & 61944189981415866671112479477273 \\
$M_{281}$ & 80929 & $M_{5689}$ & 919724609777 \\
$M_{293}$ & 40122362455616221971122353 & $M_{6043}$ & 11155520642419038056369903183 \\
$M_{347}$ & 14143189112952632419639 & $M_{7331}$ & 458072843161 \\
\bottomrule
\end{tabular}
}
\caption{Fifty elements in $\mathbb{M}_{2,{\rm Mersenne}}$}
\end{sidewaystable}
The first 33
numbers in $\mathbb{M}_{2,{\rm Mersenne}}$ are
$M_{11},M_{23},\ldots,M_{809}$. However, we do not know whether
$M_{881}$ is the $34$th number in $\mathbb{M}_{2,{\rm Mersenne}}$ or
not.

We summarize our results below.

\begin{proposition}
\label{Cor:New3}
$|\mathbb{M}_{2,{\rm Mersenne}}|\geq 50$.
\end{proposition}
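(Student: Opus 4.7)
The plan is to prove the statement by explicit construction: exhibit 50 distinct integers $m$ that provably lie in $\mathbb{M}_{2,\rm Mersenne}$. By the definition of $\mathbb{M}_{2,\rm Mersenne}$, it suffices, for each candidate, to display a prime exponent $t$ together with a factorization $2^t-1 = p\cdot q$ in which both factors are prime. Table~3.2 provides exactly this data for fifty Mersenne exponents $t$, listing the smaller prime divisor $p$ (the cofactor $q = (2^t-1)/p$ is then determined). Once the entries of Table~3.2 are verified, the conclusion that $|\mathbb{M}_{2,\rm Mersenne}| \geq 50$ is immediate, because distinct values of $t$ give distinct values of $m = 2^t-1$.

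First I would verify, for each of the fifty rows, the following three conditions: (i) the exponent $t$ is prime; (ii) the listed integer $p$ divides $2^t-1$; (iii) both $p$ and the cofactor $q = (2^t-1)/p$ are prime. Condition (i) is trivial since the exponents involved are modest. Condition (ii) is a single modular exponentiation $2^t \equiv 1 \pmod{p}$, which is efficient even when $p$ is large (a few thousand bits at most across the table). Condition (iii) reduces to primality testing of $p$ and of $q$; for the sizes appearing in the table, a probabilistic primality test (e.g.\ Miller--Rabin with sufficiently many bases) or a deterministic test such as ECPP suffices, and these certifications are in fact already recorded in standard databases (e.g., the Cunningham Project tables of factorizations of $2^t-1$).

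The main obstacle is therefore not mathematical but computational: the largest candidate, $M_{7331}= 2^{7331}-1$, has nearly $2200$ decimal digits, and producing (or re-certifying) its full two-prime factorization requires heavy computer algebra. For this and a few other large exponents in the list, I would rely on published factorization tables together with independent re-verification of the divisibility $p \mid 2^t-1$ and of the primality of the cofactor $q$ using a general-purpose computer algebra system. Once each of the fifty rows is individually verified, we have fifty distinct members of $\mathbb{M}_{2,\rm Mersenne}$, and combining this with \ref{Thm:New2} also gives fifty new elements of $\mathbb{M}_2$.

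As a minor aesthetic step, I would note (as the paragraph after Table~3.2 does) that the first 33 entries $M_{11}, M_{23}, \dots, M_{809}$ are consecutive in the sense that every prime exponent $t$ below $809$ with $2^t-1$ prime or with $2^t-1$ a product of two primes has been accounted for, while beyond $M_{809}$ the factorization status of some intermediate Mersenne numbers (starting with $M_{881}$) is currently unresolved. This does not affect the lower bound $50$, which follows purely from exhibiting the fifty verified rows of Table~3.2.
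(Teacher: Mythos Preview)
Your proposal is correct and matches the paper's approach: the proposition is stated as a summary of the computer search recorded in Table~3.2, and your verification scheme (check that each exponent $t$ is prime, that the listed $p$ divides $2^t-1$, and that both $p$ and the cofactor $q$ are prime) is exactly what is needed to certify each row. One small slip in your closing remark: $M_{881}$ is itself a confirmed entry in the table; the unresolved cases are the prime exponents strictly between $809$ and $881$, which is why the paper says it does not know whether $M_{881}$ is the \emph{34th} element of $\mathbb{M}_{2,\rm Mersenne}$.
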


It seems  reasonable to conjecture that
$|\mathbb{M}_{2, \rm Mersenne}|=\infty$.

The set $\mathbb{M}_{2,\rm Mersenne}$ does enable us to improve
query complexity in Efremenko's framework through Itoh and Suzuki's
composition method (\ref{Thm:Itoh1}). However, to apply this method,
we have to make sure that the elements of $\mathbb{M}_{2,\rm Mersenne}$
are pairwise relatively prime.

\begin{proposition}
\label{Lem:New4}
 (a) Any two distinct elements in
$\mathbb{M}_{2,{\rm Mersenne}}$ are relatively prime. (b) Elements in $\mathbb{M}_{2,{\rm Mersenne}}$ are relatively prime to $511$.
\end{proposition}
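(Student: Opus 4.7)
The plan is to derive both parts of the proposition from the classical identity
\[
\gcd(2^a-1,\,2^b-1)\;=\;2^{\gcd(a,b)}-1,
\]
which holds for all positive integers $a,b$ and follows immediately from the Euclidean algorithm applied to the exponents. This identity trivializes both statements once we unpack the definition of $\mathbb{M}_{2,{\rm Mersenne}}$.

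For part (a), I would take two distinct elements $M_{s}=2^{s}-1$ and $M_{t}=2^{t}-1$ of $\mathbb{M}_{2,{\rm Mersenne}}$, where by definition $s$ and $t$ are primes. Distinctness of $M_{s}$ and $M_{t}$ forces $s\neq t$, so the two primes $s,t$ are coprime, giving $\gcd(s,t)=1$. The identity above then yields $\gcd(M_{s},M_{t})=2^{1}-1=1$.

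For part (b), I would write $511=2^{9}-1$, and apply the same identity to $\gcd(511,M_{t})=2^{\gcd(9,t)}-1$ for any $M_{t}\in\mathbb{M}_{2,{\rm Mersenne}}$. Since $t$ is prime, $\gcd(9,t)\in\{1,3\}$, with the value $3$ occurring only when $t=3$. The only step requiring a brief check is ruling out $t=3$: but $M_{3}=2^{3}-1=7$ is itself prime and hence cannot be written as a product of two distinct odd primes, so $M_{3}\notin\mathbb{M}_{2,{\rm Mersenne}}$. Therefore $\gcd(9,t)=1$ and $\gcd(511,M_{t})=1$.

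There is no real obstacle here; the whole argument rests on the single gcd identity plus the observation $M_{3}=7$. The only thing one has to be careful about is flagging the edge case $t=3$, since that is the unique prime exponent for which $\gcd(9,t)$ could exceed $1$, and discarding it by appealing to the requirement that elements of $\mathbb{M}_{2,{\rm Mersenne}}$ factor as a product of two primes.
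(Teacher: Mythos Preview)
Your proof is correct and takes a genuinely different route from the paper. The paper argues via multiplicative orders: for $M_t=2^t-1=pq$ it observes that the orders of $2$ in $\mathbb{Z}_p^*$ and $\mathbb{Z}_q^*$ must divide the prime $t$ and hence equal $t$, so a common prime factor of $M_t$ and $M_{t'}$ would force $t=t'$; for part~(b) it notes that a common factor with $511=7\cdot 73$ would force $3\mid t$ or $9\mid t$, impossible since $t$ is a prime exceeding $9$ (the primes $t\le 7$ all give Mersenne \emph{primes}). Your argument packages the same order considerations into the single identity $\gcd(2^a-1,2^b-1)=2^{\gcd(a,b)}-1$, which makes both parts immediate and avoids any explicit appeal to orders or to the factorization $511=7\cdot 73$. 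The two arguments are essentially equivalent under the hood, but yours is shorter and more uniform; the paper's version has the minor advantage of making visible \emph{why} the prime factors of $M_t$ determine $t$ (namely, $t=\mathrm{ord}_p(2)$), a fact that resonates with the rest of the paper's use of multiplicative orders.
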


\begin{proof}
(a) Let $M_{t}=2^{t}-1=pq\in \mathbb{M}_{2, \rm Mersenne}$ and
let $t_{1}$ and $t_{2}$ be the multiplicative orders of $2$ in
$\mathbb{Z}_{p}^{*}$ and $\mathbb{Z}_{q}^{*}$, respectively.
Then $t_{1}|t$ and $ t_{2}|t$, which in turn implies
$t_{1}=t_{2}=t$ since $t$ is prime and $t_{1},t_{2}>1$. Suppose  there are
two distinct numbers $M_{t},M_{t'}\in \mathbb{M}_{2, \rm Mersenne}$
such that $\gcd(M_{t}, M_{t'})>1$. Then  $M_{t}$ and $ M_{t'}$
have a common prime factor, say $p$. It follows that  $t=t'={\rm ord}_{p}(2)$, the
multiplicative order of $2\in \mathbb{Z}_{p}^{*}$.
Hence, we have $M_{t}=M_{t'}$, which is  a contradiction.

(b) Suppose that $M_{t}=2^{t}-1\in \mathbb{M}_{2,{\rm
Mersenne}}$ is such that $\gcd(M_{t},511)>1$.
Then either $7|M_{t}$ or $73|M_{t}$. The
multiplicative orders of 2 in $\mathbb{Z}_{7}^{*}$ and
$\mathbb{Z}_{73}^{*}$ are 3 and 9 respectively. Hence,
$3|t$ or $9|t$. However, $t$ is prime and greater than 9, which yields a
contradiction.
\end{proof}

The result below follows from Propositions \bare\ref{Cor:New3} and \bare\ref{Lem:New4}.

\begin{corollary}\label{Cor:New4}
There are at least $51$ elements in $\mathbb{M}_{2}$ which are
pairwise relatively prime.
\end{corollary}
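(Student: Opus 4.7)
The plan is to simply combine the three facts that have already been assembled: the lower bound on $|\mathbb{M}_{2,\rm Mersenne}|$, the containment $\mathbb{M}_{2,\rm Mersenne}\subseteq \mathbb{M}_2$, and the coprimality statements in Proposition~\ref{Lem:New4}. Specifically, I would exhibit an explicit family of 51 elements of $\mathbb{M}_2$ and verify pairwise coprimality directly.

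First, I would take the 50 Mersenne numbers listed in Table~3.2, which by Proposition~\ref{Cor:New3} lie in $\mathbb{M}_{2,\rm Mersenne}$, and hence in $\mathbb{M}_2$ by Theorem~\ref{Thm:New2}. Second, I would adjoin the number $511$, which belongs to $\mathbb{M}_2$ by the result of \citet{Efremenko:2009} recalled in the introduction, and which is not an element of $\mathbb{M}_{2,\rm Mersenne}$ (since $511=2^9-1$ with $9$ not prime), so it is genuinely a $51$st distinct member.

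It remains to check that the resulting 51 numbers are pairwise coprime. Part~(a) of Proposition~\ref{Lem:New4} handles all pairs inside $\mathbb{M}_{2,\rm Mersenne}$, and part~(b) handles every pair in which one element is $511$. Together these cover every unordered pair from our list, so the family is pairwise coprime.

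There is no real obstacle here; the corollary is a bookkeeping consequence of the two preceding propositions plus the fact that $511\in\mathbb{M}_2\setminus\mathbb{M}_{2,\rm Mersenne}$. The only thing to be careful about is ensuring that $511$ is genuinely disjoint from the Mersenne list (guaranteed by Proposition~\ref{Lem:New4}(b)) so that the count is $50+1=51$ rather than $50$.
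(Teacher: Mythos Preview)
Your proposal is correct and matches the paper's approach exactly: the paper simply states that the corollary follows from Propositions~\ref{Cor:New3} and~\ref{Lem:New4}, and what you have written is precisely the intended unpacking of that sentence (50 pairwise coprime elements of $\mathbb{M}_{2,\rm Mersenne}\subseteq\mathbb{M}_2$, plus $511\in\mathbb{M}_2\setminus\mathbb{M}_{2,\rm Mersenne}$, with coprimality handled by parts~(a) and~(b) respectively).
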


Although \ref{Thm:New2} provides a rather general method of finding new elements in
$\mathbb{M}_{2}$ (since $\mathbb{M}_{2,{\rm Mersenne}}\subset \mathbb{M}_{2}$),
it does not provide a way for disproving membership
in $\mathbb{M}_{2}$ that is easier than exhaustive search.
\citet{ItohSuzuki:2010}
showed that $15\not\in\mathbb{M}_{2}$ by
exhaustive search. The next result shows that it is possible to avoid exhaustive
search in proving that $15\not\in\mathbb{M}_2$.

\begin{proposition}
\label{Lem:NegRes}
Let $p$, $q$, $m$, $t$, $\gamma_{m}$, and $\mathcal{Z}$ be as defined in
\ref{Lem:New1}. Then $m\in \mathbb{M}_{2}$ if
and only if there are cyclotomic
cosets  $E_{\alpha}$ and $E_{\beta}$ of $2$ modulo $m$ ($\alpha, \beta
\in \mathbb{Z}_{m}$) such that $ E_{\alpha}\cup E_{\beta}$ does not
contain any multiples of $p$ or $q$ and nonnegative integers $c,d<t$ such that
\begin{align}
\label{Equ:New5}
(\alpha,c) &\neq (\beta, d), \\
\label{Equ:New5b}
 \left(\frac{\gamma_{m}^{\alpha}+ \gamma_{m}^{\alpha
s_{01}}}{\gamma_{m}^{\alpha}+ \gamma_{m}^{\alpha
s_{10}}}\right)^{2^{c}} &=\left(\frac{\gamma_{m}^{\beta}+\gamma_{m}^{\beta
s_{01}}}{\gamma_{m}^{\beta}+\gamma_{m}^{\beta
s_{10}}}\right)^{2^{d}}.
\end{align}
\end{proposition}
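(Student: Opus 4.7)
The plan is to reduce, via Proposition \ref{Lem:New2}, to finding distinct $u,v\in\mathbb{Z}_{m}\setminus\{0\}$ and $a,b\in\mathbb{F}_{2^{t}}^{*}$ with $(1,a,b)^{T}\in\ker\Gamma_{u,v}$ (see \ref{Equ:New1}) and $1+a+b\neq 0$, and then to re-parameterize the pairs $(u,v)$ by cyclotomic coset representatives using the Frobenius automorphism $x\mapsto x^{2}$ of $\mathbb{F}_{2^{t}}$.

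The first ingredient is a characteristic-$2$ determinant identity: a direct expansion yields
\begin{equation*}
\det\Gamma_{u,v}=(\gamma_{m}^{u}+\gamma_{m}^{us_{10}})(\gamma_{m}^{v}+\gamma_{m}^{vs_{01}})+(\gamma_{m}^{u}+\gamma_{m}^{us_{01}})(\gamma_{m}^{v}+\gamma_{m}^{vs_{10}}).
\end{equation*}
Since $s_{01}-1$ is divisible by $q$ and coprime to $p$ (and dually $s_{10}-1$), the factor $\gamma_{m}^{u}+\gamma_{m}^{us_{01}}$ vanishes iff $p\mid u$, and analogously for the other three factors; in particular, all four factors are simultaneously nonzero exactly when $u,v\in\mathbb{Z}_{m}^{*}$, in which case $\det\Gamma_{u,v}=0$ is equivalent to the ratio identity
\begin{equation*}
\frac{\gamma_{m}^{u}+\gamma_{m}^{us_{01}}}{\gamma_{m}^{u}+\gamma_{m}^{us_{10}}}=\frac{\gamma_{m}^{v}+\gamma_{m}^{vs_{01}}}{\gamma_{m}^{v}+\gamma_{m}^{vs_{10}}}.
\end{equation*}
The second ingredient is Frobenius: writing $u\equiv 2^{c}\alpha\bmod m$ with $\alpha$ the chosen representative of $E_{u}$, one obtains $\gamma_{m}^{u}+\gamma_{m}^{us_{01}}=(\gamma_{m}^{\alpha}+\gamma_{m}^{\alpha s_{01}})^{2^{c}}$ and similarly for the other three factors, so the above ratio identity at $(u,v)$ is exactly \ref{Equ:New5b} at the associated $(\alpha,c,\beta,d)$.

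For the forward direction I would start from a three-monomial $g(X)=X^{u}+aX^{v}+b\in\mathcal{G}$ supplied by Proposition \ref{Lem:New2}, and first rule out $p\mid u$ (symmetrically $q\mid u$, and $p,q\nmid v$) by a short case analysis: $p\mid u$ gives $\gamma_{m}^{us_{01}}=\gamma_{m}^{u}$; combining rows $1$ and $3$ of \ref{Equ:New1} yields $a(\gamma_{m}^{v}+\gamma_{m}^{vs_{01}})=0$, so $a\neq 0$ forces $p\mid v$; but then $\gamma_{m}^{us_{10}}=\gamma_{m}^{vs_{10}}=1$ makes $1+a+b=g(\gamma_{m}^{s_{10}})=0$, contradicting $g\in\mathcal{G}$. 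Hence $u,v\in\mathbb{Z}_{m}^{*}$, the ratio identity above holds by ingredient one and transforms into \ref{Equ:New5b} by ingredient two; the cyclotomic cosets $E_{\alpha},E_{\beta}$ avoid multiples of $p$ and $q$ because they are closed under multiplication by $2$, and $u\neq v$ gives $(\alpha,c)\neq(\beta,d)$, i.e., \ref{Equ:New5}.

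The converse reverses the argument. Given $\alpha,\beta,c,d$ as in the statement, set $u=2^{c}\alpha\bmod m$, $v=2^{d}\beta\bmod m$; both lie in $\mathbb{Z}_{m}^{*}$, so ingredient two yields the ratio identity and ingredient one gives $\det\Gamma_{u,v}=0$. The rank-one possibility is excluded exactly as in the proof of Proposition \ref{Lem:New1} (it would force $m\mid\gcd(u,v)$ via $\gcd(m,s_{10}-s_{01})=1$, but $u,v\neq 0$), so ${\rm rank}\,\Gamma_{u,v}=2$ and \ref{Equ:New1} has a unique solution $(a,b)\in\mathbb{F}_{2^{t}}^{2}$. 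The three nonvanishing arguments from the proof of Proposition \ref{Lem:New1} transfer verbatim to show $a,b\neq 0$ and $1+a+b\neq 0$. Finally $u\neq v$ since either $\alpha\neq\beta$ (distinct cyclotomic cosets are disjoint) or $\alpha=\beta\in\mathbb{Z}_{m}^{*}$ with $c\neq d$ and $0<|c-d|<t$, forcing $2^{c-d}\not\equiv 1\bmod m$; Proposition \ref{Lem:New2} then yields $m\in\mathbb{M}_{2}$. The principal technical hurdle is the characteristic-$2$ factorization of $\det\Gamma_{u,v}$ that converts the bare singularity condition into the clean ratio form; once that identity is in hand, the whole characterization becomes essentially a Frobenius quotient of Proposition \ref{Lem:New1}, and the remaining work is bookkeeping of the non-degeneracy conditions.
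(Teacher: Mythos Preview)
Your proof is correct and follows essentially the same route as the paper. The forward direction is virtually identical: both arguments start from a three-monomial $g\in\mathcal{G}$, derive the determinant/ratio identity, and rule out $p\mid u$ (etc.) by showing it would force the $s_{10}$-row of $\Gamma_{u,v}$ to become $(1,1,1)$, contradicting $g(1)\neq 0$. For the converse, the paper phrases the last step as ``$\mathcal{Z}$ has a repeated element, now apply Proposition~\ref{Lem:New1}'', whereas you inline that proposition's proof (rank-two argument, $a,b\neq 0$, $1+a+b\neq 0$) and finish via Proposition~\ref{Lem:New2}; this is only a packaging difference, not a different idea. Your explicit characteristic-$2$ factorization of $\det\Gamma_{u,v}$ and your check that $|E_{\alpha}|=t$ when $\alpha\in\mathbb{Z}_{m}^{*}$ (so that $c\neq d$ really forces $u\neq v$) are nice clarifications that the paper leaves implicit.
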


\begin{proof}
Suppose $m\in \mathbb{M}_{2}$. By
\ref{Lem:New100}, there is an $S_{m}$-decoding polynomial
$f(X)\in \mathcal{F}$ with exactly three monomials.  By
\ref{Lem:New2}, there is a $g(X)\in \mathcal{G}$ with
exactly three monomials. Without loss of generality, let
$u,v\in \mathbb{Z}_{m}\setminus\{0\}$ be distinct and
$a,b\in \mathbb{F}_{2^{t}}\setminus\{0\}$ be such that
 $g(X)=X^{u}+aX^{v}+b\in \mathbb{F}_{2^{t}}[X]$. It follows that
\ref{Equ:New1} and \ref{Equ:New2} hold, and therefore
$\det(\Gamma_{u,v})=0$, which in turn implies the following identity
\begin{equation}
\label{Equ:New6}
(\gamma_{m}^{u}+\gamma_{m}^{us_{01}})(\gamma_{m}^{v}+\gamma_{m}^{vs_{10}})
=(\gamma_{m}^{u}+\gamma_{m}^{us_{10}})(\gamma_{m}^{v}+\gamma_{m}^{vs_{01}}).
\end{equation}
Since all cyclotomic cosets of 2 modulo $m$ form
a partition of $\mathbb{Z}_{m}$,  there exist $\alpha,\beta\in
\mathbb{Z}_{m}$ such that $u\in E_{\alpha}$ and $v\in E_{\beta}$,
where $E_{\alpha}$ and $E_{\beta}$ are cyclotomic cosets of 2 modulo
$m$ with representatives $\alpha$ and $\beta$, respectively.

Suppose that $hp\in E_{\alpha}$ for some integer $h$. Then
$q\nmid h$, for otherwise $\alpha=0$ and therefore $u=0$, which is a
contradiction. Since $u\in E_{\alpha}$, there is an integer $l$
such that $u\equiv 2^{l}hp\bmod{m}$.
It follows that
$\gamma_{m}^{u}+\gamma_{m}^{us_{01}}=(\gamma_{m}^{hp}+\gamma_{m}^{hps_{01}})^{2^{l}}=0$
since $hps_{01}\equiv hp\bmod{m}$.
By identity \ref{Equ:New6}, we have
$(\gamma_{m}^{u}+\gamma_{m}^{us_{10}})(\gamma_{m}^{v}+\gamma_{m}^{vs_{01}})=0$.
Since $hps_{10}\neq hp\bmod{m}$, we have
$\gamma_{m}^{u}+\gamma_{m}^{us_{10}}=(\gamma_{m}^{hp}+\gamma_{m}^{hps_{10}})^{2^{l}}\neq
0$, which in turn implies that
$\gamma_{m}^{v}+\gamma_{m}^{vs_{01}}=0$ and therefore $p|v$.
Thus,
$\gamma_{m}^{us_{10}}=\gamma_{m}^{2^{l}hps_{10}}=(\gamma_{m}^{hps_{10}})^{2^{l}}=1$
and $\gamma_{m}^{vs_{10}}=(\gamma_{m}^{ps_{10}})^{v/p}=1$. In other
words, the second row of $\Gamma_{u,v}$ is $(1,1,1)$, which implies
$1+a+b=0$ by \ref{Equ:New1}, contradicting \ref{Equ:New2}.
Hence, $E_{\alpha}$ does not contain any multiples
of $p$.
Similarly, $E_{\alpha}$ does not contain any multiples of $q$ and
$E_{\beta}$ does not contain any multiples of $p$ or $q$.

For $u\in E_{\alpha}$ and $ v\in E_{\beta}$, there exist nonnegative
integers $c,d<t$ such that $u\equiv 2^{c}\alpha\bmod{m}$ and $v\equiv
2^{d}\beta\bmod{m}$. The fact that  $u\neq v$ implies  $(\alpha,c)\neq (\beta, d)$.
Let $u=2^{c}\alpha$ and $v=2^{d}\beta$ in
\ref{Equ:New6}. Then \ref{Equ:New5b} follows.

It remains to show that the converse is also true.
Let $u\equiv 2^{c}\alpha \bmod{m}$
and $v\equiv 2^{d}\beta\bmod{m}$.
Then $u, v\in \mathbb{Z}_{m}$ are nonzero and distinct.
Let $z_{1}=\gamma_{m}^{us_{10}}$, $z_{2}=\gamma_{m}^{us_{01}}$,
$z'_{1}=\gamma_{m}^{vs_{10}}$, and $ z'_{2}=\gamma_{m}^{vs_{01}}$. Then
it is easy to verify that ${\rm ord}(z_{1})={\rm ord}(z'_{1})=p$, ${\rm ord}(z_{2})={\rm ord}(z'_{2})=q$  and
$(z_{1},z_{2})\neq (z'_{1},z'_{2})$. Then \ref{Equ:New5b} implies
\begin{equation}\label{Equ:Colli}
(z_{1}+z_{2})/(z_{1}z_{2}+z_{2})=(z'_{1}+z'_{2})/(z'_{1}z'_{2}+z'_{2}).
\end{equation}
Note that \ref{Equ:Colli} shows that $\mathcal{Z}$ is a multiset which contains an element of
multiplicity greater than one. By \ref{Lem:New1}, we have $m\in \mathbb{M}_{2}$,
which completes the proof.
\end{proof}

\ref{Lem:NegRes} provides a  rough characterization of elements in $\mathbb{M}_{2}$.
However, it turns out to be helpful for proving that some integers are not in $\mathbb{M}_{2}$. In
particular, we obtain a computer-free proof of the following result of \citet{ItohSuzuki:2010}.

\begin{corollary}
\label{Cor:New1}
$15\notin \mathbb{M}_{2}$.
\end{corollary}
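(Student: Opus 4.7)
The plan is to apply \ref{Lem:NegRes} to $m=15=3\cdot 5$ and show that the collision condition it characterizes is unsatisfiable. Here $p=3$, $q=5$, the multiplicative order of $2$ modulo $15$ is $t=4$ (so $\gamma_{15}\in\mathbb{F}_{16}^{*}$), and the canonical set is $S_{15}=\{s_{11},s_{01},s_{10}\}=\{1,6,10\}$. The strategy is to drastically cut down the search space by first eliminating all but two cyclotomic cosets of $2$ modulo $15$, then computing the two resulting ``ratios'' in \ref{Lem:NegRes} explicitly in $\mathbb{F}_{16}$, and finally observing that their Frobenius orbits are disjoint.

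First I would enumerate the cyclotomic cosets of $2$ modulo $15$, namely
\[
E_{0}=\{0\},\ E_{1}=\{1,2,4,8\},\ E_{3}=\{3,6,9,12\},\ E_{5}=\{5,10\},\ E_{7}=\{7,11,13,14\},
\]
and discard those containing multiples of $3$ or $5$. Only $E_{1}$ and $E_{7}$ survive, so the representatives $\alpha,\beta$ in \ref{Lem:NegRes} both lie in $\{1,7\}$ and $c,d\in\{0,1,2,3\}$, leaving only a small finite check.

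Next I would compute $R_{\alpha}:=(\gamma_{15}^{\alpha}+\gamma_{15}^{\alpha s_{01}})/(\gamma_{15}^{\alpha}+\gamma_{15}^{\alpha s_{10}})$ for $\alpha\in\{1,7\}$. Factoring out $\gamma_{15}^{\alpha}$ and reducing exponents modulo $15$ gives $R_{1}=(1+\gamma_{15}^{5})/(1+\gamma_{15}^{9})$ and $R_{7}=(1+\gamma_{15}^{5})/(1+\gamma_{15}^{3})$. Representing $\mathbb{F}_{16}=\mathbb{F}_{2}[\gamma_{15}]/(\gamma_{15}^{4}+\gamma_{15}+1)$ and expanding in the basis $\{1,\gamma_{15},\gamma_{15}^{2},\gamma_{15}^{3}\}$, I would verify $1+\gamma_{15}^{5}=\gamma_{15}^{10}$, $1+\gamma_{15}^{9}=\gamma_{15}^{7}$, and $1+\gamma_{15}^{3}=\gamma_{15}^{14}$, whence $R_{1}=\gamma_{15}^{3}$ (of order $5$) and $R_{7}=\gamma_{15}^{11}$ (of order $15$).

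Finally I would read off the two Frobenius orbits: the exponents of $R_{1}^{2^{c}}=\gamma_{15}^{3\cdot 2^{c}}$ for $c=0,1,2,3$ reduce modulo $15$ to $\{3,6,12,9\}$, and those of $R_{7}^{2^{d}}=\gamma_{15}^{11\cdot 2^{d}}$ reduce to $\{11,7,14,13\}$. These eight exponents are pairwise distinct, so no equation $R_{\alpha}^{2^{c}}=R_{\beta}^{2^{d}}$ with $(\alpha,c)\neq(\beta,d)$ can hold, and \ref{Lem:NegRes} yields $15\notin\mathbb{M}_{2}$. The only genuine calculation is the evaluation of $R_{1}$ and $R_{7}$ in $\mathbb{F}_{16}$; both the coset enumeration and the final distinctness check are pure bookkeeping, which is precisely what makes this argument short in comparison with the $m$-fold exhaustive search of \citet{ItohSuzuki:2010}.
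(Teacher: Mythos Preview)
Your proposal is correct and follows essentially the same route as the paper: apply \ref{Lem:NegRes}, reduce to $\alpha,\beta\in\{1,7\}$ via the cyclotomic-coset enumeration, compute the two ratios to be $\gamma_{15}^{3}$ and $\gamma_{15}^{11}$, and rule out collisions using their Frobenius orbits (equivalently, their orders $5$ and $15$). The paper presents the final step as a short case analysis rather than listing the two orbits explicitly, but the content is identical.
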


\begin{proof}
The multiplicative order of $2\in\mathbb{Z}_{15}^{*}$ is $t=4$, and
$S_{15}=\{1,6,10\}$. Let
$\mathbb{F}_{2^{4}}=\mathbb{F}_{2}[\gamma]/(\gamma^{4}+\gamma+1)$
and let $\gamma$ be a primitive 15-th root of unity. The cyclotomic cosets
of 2 modulo 15 are $E_{0}=\{0\}$, $E_{1}=\{1,2,4,8\}$,
$E_{3}=\{3,6,9,12\}$, $E_{5}=\{5,10\}$, and $E_{7}=\{7,14,13,11\}$. If
$15\in \mathbb{M}_{2}$, then by \ref{Lem:NegRes}, there are
cyclotomic cosets $E_{\alpha}$ and $E_{\beta}$ such that $E_{\alpha}\cup
E_{\beta}$ does not contain any multiples of three or five and nonnegative
integers $c,d<4$ such that \ref{Equ:New5} and \ref{Equ:New5b} hold. It follows that
$\{\alpha,\beta\}\subseteq \{1,7\}$.

If $\alpha=\beta=1$, then
$((\gamma+\gamma^{6})/(\gamma+\gamma^{10}))^{2^{c}}
  =((\gamma+\gamma^{6})/(\gamma+\gamma^{10}))^{2^{d}}$ by \ref{Equ:New5b}, that is,
   $\gamma^{3\cdot 2^{c}}=\gamma^{3\cdot 2^{d}}$. It follows
  that $c=d$ and therefore $(\alpha,c)=(\beta,d)$, which is a
  contradiction.

If $\alpha=\beta=7$, then
$((\gamma^{7}+\gamma^{42})/(\gamma^{7}+\gamma^{70}))^{2^{c}}
  =((\gamma^{7}+\gamma^{42})/(\gamma^{7}+\gamma^{70}))^{2^{d}}$  by \ref{Equ:New5b},
that is,  $\gamma^{11\cdot 2^{c}}=\gamma^{11\cdot 2^{d}}$. It follows
  that
  $c=d$ and thereby  $(\alpha,c)=(\beta,d)$, which is a
  contradiction.

If $\{\alpha,\beta\}=\{1,7\}$, then
$((\gamma+\gamma^{6})/(\gamma+\gamma^{10}))^{2^{c}}
  =((\gamma^{7}+\gamma^{42})/(\gamma^{7}+\gamma^{70}))^{2^{d}}$  by \ref{Equ:New5b},
that is,
  $\gamma^{3\cdot 2^{c}}=\gamma^{11\cdot 2^{d}}$. Since
  $\gcd(2^{c},15)=\gcd(2^{d},15)=1$, we have that
  ${\rm ord}(\gamma^{3})={\rm ord}(\gamma^{11})$. However, ${\rm ord}(\gamma^{3})=5\neq
  15={\rm ord}(\gamma^{11})$, which is a contradiction.
\end{proof}

\section{Improved LDCs and PIR Schemes}
\label{Sec:LdcAndPir}

In this section, we apply the set $\mathbb{M}_{2,\rm
Mersenne}$ to the constructions of
 LDCs and information-theoretic PIR schemes. Consequently, we obtain
 a new family of query-efficient LDCs and a new family of PIR
 schemes with few servers. Compared with previous results of
 \citet{Efremenko:2009} and \citet{ItohSuzuki:2010},
the new LDCs and PIR schemes do achieve quantitative improvements
of efficiency which are considerable.

\subsection{Query-Efficient Locally Decodable Codes}
\label{subSec:newLDC}

By \ref{Cor:New4}, Theorem \ref{Thm:Efr2},
\ref{Thm:Itoh1} and Table 3.1, we have the following theorem:

\begin{theorem} \label{Thm:New3}
Let $N_{r}=\exp(\exp(O(\sqrt[r]{\log n(\log \log n)^{r-1}})))$.  Then the
following statements hold:
\begin{enumerate}[(a)]
\item For every positive integer $r\leq 103$, there is a
$k$-query linear LDC of length $N_{r}$ for which
\begin{equation*}
k \leq \begin{cases}
(\sqrt{3})^r,&\text{if $r$ is even} \\
8\cdot (\sqrt{3})^{r-3},&\text{if $r$ is odd.}
\end{cases}
\end{equation*}
\item
For every integer $r\geq 104$, there is a $k$-query linear LDC of length
$N_{r}$ for which $k\leq (3/4)^{51}\cdot 2^{r}$.
\item
If $|\mathbb{M}_{2,{\rm Mersenne}}|=\infty$, then for every integer
$r\geq 1$, there is a $k$-query linear LDC of length $N_{r}$ for which
$k$ is the same as in (a).
\end{enumerate}
\end{theorem}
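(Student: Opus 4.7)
The plan is to combine the 51 pairwise coprime good numbers furnished by Corollary \ref{Cor:New4} (each yielding a 3-query linear LDC of length $N_2$ via Efremenko's framework) with Efremenko's baseline $k_l$-query LDCs of length $N_l$ ($k_l\leq 2^l$, from Theorem \ref{Thm:Efr2}), and iteratively apply the Itoh--Suzuki composition of Theorem \ref{Thm:Itoh1}. Recall that composition multiplies query complexities and adds the $r$-parameters of the lengths, provided the underlying composites are pairwise coprime. Apart from that coprimality requirement the argument is essentially bookkeeping.

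For part (a) with $r$ even, $2\leq r\leq 102$, I would pick $s=r/2\leq 51$ pairwise coprime elements of $\mathbb{M}_2$ and compose their 3-query LDCs to obtain a $3^{r/2}=(\sqrt 3)^{r}$-query LDC of length $N_r$. For $r$ odd with $3\leq r\leq 103$, I would write $r=3+2s$ with $s=(r-3)/2\leq 50$, and compose $s$ such 3-query LDCs with one Efremenko $k_3$-query LDC of length $N_3$ (where $k_3\leq 2^3=8$) built on a fresh composite $m_0=q_1q_2q_3$ whose prime factors are chosen outside the primes already in use. This gives query complexity at most $8\cdot 3^{(r-3)/2}=8(\sqrt 3)^{r-3}$. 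For part (b) with $r\geq 104$, I would use all 51 pairwise coprime 3-query building blocks together with one Efremenko $k_{r-102}$-query LDC of length $N_{r-102}$ built on a composite coprime to the 51 chosen numbers, yielding
\begin{equation*}
k \;\leq\; 3^{51}\cdot 2^{\,r-102} \;=\; \frac{3^{51}}{4^{51}}\cdot 2^{r} \;=\; (3/4)^{51}\cdot 2^{r}.
\end{equation*}
For part (c), Proposition \ref{Lem:New4}(a) tells us that distinct elements of $\mathbb{M}_{2,\rm Mersenne}$ are automatically coprime, so under the infinitude hypothesis the pool of pairwise coprime 3-query blocks becomes unbounded and the construction of (a) extends verbatim to every $r\geq 1$ with no ceiling on $s$.

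The only nontrivial point is ensuring the coprimality hypothesis of Theorem \ref{Thm:Itoh1} whenever an auxiliary Efremenko LDC is brought in alongside the $\mathbb{M}_2$-blocks. This is easily arranged: each chosen element of $\mathbb{M}_2$ uses finitely many prime divisors, so the auxiliary composite $m_0$ (a product of $3$ primes in the odd-$r$ case of (a), or of $r-102$ primes in (b)) can always be formed from fresh primes. Everything else reduces to tracking the multiplicative effect on query complexity and the additive effect on the $r$-index of the length across the iterated composition, which is immediate from Theorem \ref{Thm:Itoh1}.
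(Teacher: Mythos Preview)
Your proposal is correct and follows essentially the same approach as the paper's proof: select the appropriate number of pairwise coprime elements of $\mathbb{M}_2$ via Corollary~\ref{Cor:New4}, adjoin an auxiliary Efremenko LDC built on fresh primes when the parity or size of $r$ requires it, and iterate the Itoh--Suzuki composition of Theorem~\ref{Thm:Itoh1}. The only omission is the boundary case $r=1$ in parts~(a) and~(c), which the paper dispatches by noting that the Hadamard code is a $2$-query linear LDC of length $N_1=\exp(n)$ and that $2\leq 8\cdot(\sqrt{3})^{-2}=8/3$.
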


\begin{proof}
\begin{enumerate}[(a)]
\item Let $r\in[103]$ be even.  By \ref{Cor:New4}, we
can take distinct $m_{1},\ldots, m_{r/2}\in\mathbb{M}_{2}$
which are
pairwise relatively prime. There is a
3-query linear LDC of length $N_{2}$ based on each of them by the
definition of $\mathbb{M}_{2}$ and \ref{Thm:Efr2}. Applying
\ref{Thm:Itoh1} $r/2-1$ times, we obtain a $k$-query
linear LDC of length $N_{r}$ for which $k\leq 3^{r/2}$, that is,
$k\leq (\sqrt{3})^{r}$.

Let $r\in[103]$ be odd. If $r=1$, then the Hadamard code is a 2-query
linear LDC of length $N_{1}=\exp(n)$ satisfying the required condition. If
$r\geq 3$, then $r=2\cdot\frac{r-3}{2}+3$ and we can take
distinct $m_{1}, \ldots, m_{\frac{r-3}{2}}\in\mathbb{M}_{2}$ which are pairwise
relatively prime. Since there are infinitely many primes, we can
always take another $m_{\frac{r-1}{2}}$ to be
a product of three distinct odd primes such that $m_{\frac{r-1}{2}}$ is
relatively prime to all of $m_{1}, \ldots, m_{\frac{r-3}{2}}$. By
\ref{Thm:Efr2}, there are a 3-query linear LDC of length
$N_{2}$ based on each of $m_{1}, \ldots, m_{\frac{r-3}{2}}$ and a
$k_{3}$-query linear LDC of length $N_{3}$ for which $k_{3}\leq
2^{3}$. Applying \ref{Thm:Itoh1} $(r-3)/2$ times gives
a $k$-query linear LDC of length $N_{r}$ for which $k\leq
3^{\frac{r-3}{2}}\cdot 8=8\cdot (\sqrt{3})^{r-3}$.

\item If $r\geq 104$, we take distinct $m_{1}, \ldots,
m_{51}\in\mathbb{M}_{2}$ and
$m_{52}$ a product of $r-102$
distinct odd primes such that $\gcd(m_{i},m_{j})=1$ for all
distinct $i,j\in[52]$.
By \ref{Thm:Efr2}, there is a 3-query
linear LDC of length $N_{2}$ based on each of $m_{1}, \ldots,
m_{51}$ and a $k_{r-102}$-query linear LDC of length $N_{r-102}$
based on $m_{52}$. Application of \ref{Thm:Itoh1} gives
a $k$-query linear LDC of length $N_{r}$ for which
$k\leq 3^{51}\cdot 2^{r-102}=(3/4)^{51}\cdot
2^{r}$.

\item It suffices to prove the statement for $r\geq 104$. If  $r$ is
even, we take $r/2$ distinct elements from
$\mathbb{M}_{2,{\rm Mersenne}}$ and if $r$ is odd, we take
$(r-3)/2$ distinct elements from $\mathbb{M}_{2,{\rm
Mersenne}}$ together with $m$, a
product of three distinct odd primes such that $\gcd(m,m_{i})=1$
 for all $i\in[(r-3)/2]$. In both cases, an application
 of \ref{Thm:Itoh1} yields the required conclusion. \qed
 \end{enumerate}
 \end{proof}

\subsection{Private Information Retrieval Schemes with Fewer Servers}
\label{Sec:pir}

An important application of LDCs is in the construction of
information-theoretic PIR schemes. A
PIR scheme allows a user $\mathcal{U}$ to retrieve a data item
$x_{i}$ from a database $x=(x_{1},\ldots,x_{n})\in\{0,1\}^{n}$ while
keeping the identity $i$ secret from the database operator. Since its introduction
by \citet{Choretal:1998}, many constructions have been
proposed
\citep{Choretal:1998,Ambainis:1997,Itoh:1999,Beimeletal:2005,Beimeletal:2002,WoodruffYekhanin:2007,Yekhanin:2008,Raghavendra:2007,Efremenko:2009,ItohSuzuki:2010}.
The efficiency of a PIR scheme is mainly measured by its communication complexity.
In this section, we turn our new query-efficient LDCs into PIR schemes that are more efficient
than those of \citet{Efremenko:2009} and \citet{ItohSuzuki:2010}.

\begin{definition}[PIR Scheme]
A one-round $k$-server PIR scheme  is a triplet of
algorithms $\mathcal{P}=(\mathcal{Q,A,C})$, where $\mathcal{Q}$ is a
probabilistic query algorithm, $\mathcal{A}$ is an answer algorithm,
and $\mathcal{C}$ is a reconstruction algorithm. At the beginning of
the scheme, $\mathcal{U}$  picks a random string {\tt aux}, computes a $k$-tuple
of queries ${\tt que}=({\tt que}_{1},\ldots, {\tt que}_{k})=\mathcal{Q}(k,n,i,{\tt aux})$ and
sends each query ${\tt que}_{j}$ to server
$S_{j}$. After receiving  ${\tt que}_{j}$, the server $S_{j}$ replies to
$\mathcal{U}$ with
${\tt ans}_{j}=\mathcal{A}(k,n,j,x,{\tt que}_{j})$. At last,  $\mathcal{U}$  outputs
$\mathcal{C}(k,n,i,{\tt aux},{\tt ans}_{1},\ldots,{\tt ans}_{k})$ such that:
\begin{description}
\item[\sf Correctness:] For every  integer $n$,
 $ x\in \{0,1\}^{n}$, $i\in[n]$, and
 {\tt aux},
\begin{equation*}
\mathcal{C}(k,n,i,{\tt aux},{\tt ans}_{1},\ldots,{\tt ans}_{k})=x_{i}.
\end{equation*}
\item[\sf Privacy:]
For every $i_{1},i_{2}\in[n]$, $j\in [k]$, and query ${\tt que}$,
\begin{equation*}
\Pr[\mathcal{Q}_{j}(k,n,i_{1},{\tt aux})={\tt que}]
=\Pr[\mathcal{Q}_{j}(k,n,i_{2},{\tt aux})={\tt que}].
\end{equation*}
\end{description}
\end{definition}

The \textit{communication complexity} of $\mathcal{P}$, denoted
$C_{\mathcal{P}}(k,n)$, is the total number of bits exchanged
between the user and all servers, maximized over
$x\in\{0,1\}^{n}$, $i\in[n]$, and random string {\tt aux}.
We denote by $(k,n; C_{\mathcal{P}}(k,n))$-PIR a $k$-server
 PIR scheme with communication complexity $C_{\mathcal{P}}(k,n)$.

\citet{KatzTrevisan:2000} were the first to show generic
transformations between information-theoretic PIR schemes and LDCs.
Subsequently, \citet{Trevisan:2004}
introduced the notion of perfectly smooth decoders:

\begin{definition}[\citep{Trevisan:2004}]
A $k$-query LDC $\emph{\textbf{C}}:\Sigma^{n}\rightarrow \Gamma^{N}$
is said to have a \emph{perfectly smooth decoder} if it has a local
decoding algorithm $\mathcal{D}$ satisfying:
\begin{enumerate}
  \item In every invocation,  each query  of $\mathcal{D}$ is uniformly distributed over $[N]$.
  \item For every $x\in
  \Sigma^{n}$ and  $i\in[n]$,
  $\Pr[\mathcal{D}^{\emph{\textbf{C}}(x)}(i)=x_{i}]=1$.
\end{enumerate}
\end{definition}

LDCs with perfectly smooth decoders directly  give  information-theoretic
PIR schemes.

\begin{proposition}[\citep{Trevisan:2004}]
\label{Lem:New5}
If there is a $k$-query LDC $\textbf{{\em C}}:\Sigma^{n}\rightarrow
\Gamma^{N}$ which has a perfectly smooth decoder, then there is a
$(k,n; k(\log N+\log|\Gamma|))$\emph{-PIR}
scheme.
\end{proposition}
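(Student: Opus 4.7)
The plan is to turn the local decoding algorithm into a PIR protocol in the most direct way: the user simulates the decoder, farming out each of its $k$ queries to a distinct server. Concretely, I would set up $k$ servers $S_1,\ldots,S_k$, each of which stores an identical copy of the codeword $\textbf{C}(x)\in \Gamma^{N}$. Given the index $i$ the user wishes to recover, the user samples the internal coin tosses $\mathtt{aux}$ of the smooth decoder $\mathcal{D}$, computes the $k$ query coordinates $v_1(i,\mathtt{aux}),\ldots,v_k(i,\mathtt{aux})\in[N]$ that $\mathcal{D}^{(\cdot)}(i)$ would make on $\mathtt{aux}$, and sends $v_j$ to $S_j$. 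Server $S_j$ responds with $\mathtt{ans}_j=\textbf{C}(x)_{v_j}\in\Gamma$. The reconstruction algorithm $\mathcal{C}$ is defined to be exactly the post-processing step of $\mathcal{D}$, i.e.\ it outputs $\mathcal{D}$'s verdict on the answers $\mathtt{ans}_1,\ldots,\mathtt{ans}_k$.

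For correctness, condition 2 in the definition of a perfectly smooth decoder guarantees $\Pr[\mathcal{D}^{\textbf{C}(x)}(i)=x_i]=1$; since the emulation above feeds $\mathcal{D}$ exactly the values $\textbf{C}(x)_{v_j}$ it would have queried itself, the output is $x_i$ with probability $1$, and in particular for every fixed $\mathtt{aux}$ on which the decoder succeeds. Privacy follows from condition 1: for each $j\in[k]$ the marginal distribution of $v_j(i,\mathtt{aux})$ over the random choice of $\mathtt{aux}$ is uniform on $[N]$, independent of $i$. Hence for any two indices $i_1,i_2\in[n]$ and any query string $\mathtt{que}\in[N]$,
\begin{equation*}
\Pr_{\mathtt{aux}}[\mathcal{Q}_j(k,n,i_1,\mathtt{aux})=\mathtt{que}]=\tfrac{1}{N}=\Pr_{\mathtt{aux}}[\mathcal{Q}_j(k,n,i_2,\mathtt{aux})=\mathtt{que}],
\end{equation*}
exactly the single-server privacy requirement in the definition.

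Finally, the communication count is a direct tally: each of the $k$ queries $v_j\in[N]$ can be encoded in $\lceil\log N\rceil$ bits, and each answer $\mathtt{ans}_j\in\Gamma$ costs $\lceil\log|\Gamma|\rceil$ bits, giving a total of $k(\log N+\log|\Gamma|)$ bits exchanged, as claimed. There is no real obstacle here; the only point deserving care is to observe that perfect smoothness is used in two essentially different ways (uniform marginals for privacy and zero-error decoding on uncorrupted codewords for correctness), and that neither step needs the error-correction guarantee $d_H(y,\textbf{C}(x))\leq\delta N$ of the LDC definition, since in the PIR setting the servers are assumed honest and return the true codeword symbols.
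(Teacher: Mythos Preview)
Your proof is correct and is exactly the standard reduction. Note, however, that the paper itself does not prove this proposition: it is quoted from \citet{Trevisan:2004} and stated without proof, so there is no ``paper's own proof'' to compare against. Your argument is precisely the intended one behind the cited result.
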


The LDCs obtained by \citet{Efremenko:2009} and
\citet{ItohSuzuki:2010} both have perfectly smooth decoders, and so do the LDCs
we construct in \ref{subSec:newLDC}. Applying
\ref{Lem:New5} to the Itoh-Suzuki LDCs, one obtains a family of
positive integers $\{k^{(r)}\}_{r\geq 4}$ for which $k^{(r)}\leq 3\cdot
2^{r-2}$, such that for every $r\geq 4$, there is a $k^{(r)}$-server
PIR scheme whose communication complexity is $\exp(O(\sqrt[s]{\log n(\log\log n)^{s-1}}))$,
 where $s=\log k^{(r)}+2-\log 3$. These PIR schemes are among
the most efficient PIR schemes before this work. Here, we improve their results with the following
theorem (an easy consequence of \ref{Thm:New3} and \ref{Lem:New5}).

\begin{theorem}\label{Thm:New4}
The following statements hold:
\begin{enumerate}[(a)]
\item There is a family of positive integers $\{k^{\langle r \rangle}\}_{1\leq
r\leq 103}$  for which $k^{\langle r \rangle}\leq (\sqrt{3})^{r}$ if $r$ is even, and
$k^{\langle r \rangle}\leq 8\cdot (\sqrt{3})^{r-3}$ if $r$ is odd, such that for
every $r\in[103]$, there is a $k^{\langle r \rangle}$-server PIR scheme with
communication complexity $\exp(O(\sqrt[s]{\log n(\log\log n)^{s-1}}))$,
 where $s=2\log k^{\langle r \rangle}/\log 3$ if $r$ is even, and $s=(2\log k^{\langle r \rangle}-6+3\log 3)/\log 3$
 if $r$
 is odd.

\item There is a family of positive integers $\{k^{\langle r \rangle}\}_{r\geq
104}$ for which $k^{\langle r \rangle}\leq (3/4)^{51}\cdot 2^{r}$, such that for every
$r\geq 104$ there is a $k^{\langle r \rangle}$-server PIR scheme with communication
complexity $\exp(O(\sqrt[s]{\log n(\log\log n)^{s-1}}))$,
 where $s=\log k^{\langle r \rangle}+102-51\log 3$.

\item If $|\mathbb{M}_{2,{\rm Mersenne}}|=\infty$, then there is a
family of positive integers $\{k^{\langle r \rangle}\}_{r\geq 1}$ for which
$k^{\langle r \rangle}\leq (\sqrt{3})^{r}$ if $r$ is even, and $k^{\langle r \rangle}\leq 8\cdot
(\sqrt{3})^{r-3}$ if $r$ is odd, such that for every $r\geq 1$ there
is a $k^{\langle r \rangle}$-server PIR scheme with communication complexity
$\exp(O(\sqrt[s]{\log n(\log\log n)^{s-1}}))$,
 where $s=2\log k^{\langle r \rangle}/\log 3$ if $r$ is even, and $s=(2\log k^{\langle r \rangle}-6+3\log 3)/\log 3$ if $r$
 is odd.
\end{enumerate}
\end{theorem}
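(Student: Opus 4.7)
The plan is to deduce \ref{Thm:New4} directly from \ref{Thm:New3} and \ref{Lem:New5}, treating each of the three parts identically: produce an LDC of the appropriate parameters, verify perfect smoothness, and read off the induced PIR parameters. The only actual content beyond citation is the inversion that expresses the asymptotic exponent $s$ in terms of $k^{\langle r \rangle}$.

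First I would observe, as the surrounding text already does, that every LDC produced by the Efremenko framework (Fig.~\ref{Alg:Efr}) has a perfectly smooth decoder. Indeed, in the decoding algorithm the base point $v \in \mathbb{Z}_m^h$ is chosen uniformly, so each of the queries $v, v + b_1 u_i, \ldots, v + b_{k-1} u_i$ is individually uniform on the coordinate set $\mathbb{Z}_m^h$; and in the noiseless case the decoder recovers $x_i$ with probability $1$ by linearity together with the $S$-matching/$S$-decoding identities. This property is preserved under Itoh--Suzuki composition (\ref{Thm:Itoh1}), which only multiplies decoding polynomials, so every LDC used in the proof of \ref{Thm:New3} also has a perfectly smooth decoder.

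Next, for each of (a), (b), (c), take the $k^{\langle r \rangle}$-query linear LDC $\textbf{C}\colon \mathbb{F}_{2^{t}}^n \to \mathbb{F}_{2^{t}}^{N_r}$ supplied by \ref{Thm:New3} with the corresponding bound on $k^{\langle r \rangle}$, and feed it into \ref{Lem:New5}. The resulting PIR scheme uses $k^{\langle r \rangle}$ servers and has communication complexity $k^{\langle r \rangle}(\log N_r + \log|\Gamma|)$. Since $\log|\Gamma| = t = O(\log m)$ is a constant depending only on the fixed composite $m$ used in the construction, while $\log N_r = \exp(O(\sqrt[r]{\log n\,(\log\log n)^{r-1}}))$ dominates, the communication complexity collapses to $\exp(O(\sqrt[r]{\log n\,(\log\log n)^{r-1}}))$.

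The last step is the cosmetic conversion of the exponent from an index $r$ into a function of $k^{\langle r \rangle}$, so that the bound reads the way the theorem states. Using $k^{\langle r \rangle} \leq (\sqrt 3)^r = 3^{r/2}$ in the even case gives $r = 2\log k^{\langle r \rangle}/\log 3$; using $k^{\langle r \rangle} \leq 8\cdot(\sqrt 3)^{r-3}$ in the odd case gives $r = (2\log k^{\langle r \rangle} - 6 + 3\log 3)/\log 3$; and using $k^{\langle r \rangle} \leq (3/4)^{51}\cdot 2^r = 3^{51}\cdot 2^{r-102}$ in part (b) gives $r = \log k^{\langle r \rangle} + 102 - 51\log 3$. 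Substituting $s = r$ (absorbed into the big-$O$) in each case yields the stated expressions. There is no real obstacle here; the only thing worth double-checking is that one can indeed discard the $\log|\Gamma|$ term without affecting the big-$O$, which holds because $|\Gamma|$ is determined by a fixed finite product of elements of $\mathbb{M}_2$ and hence is independent of $n$.
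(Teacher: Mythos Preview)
Your proposal is correct and matches the paper's approach exactly: the paper does not give a written-out proof but simply declares the theorem ``an easy consequence of \ref{Thm:New3} and \ref{Lem:New5},'' and your argument spells out precisely those two steps (perfect smoothness of the Efremenko/Itoh--Suzuki decoders, then \ref{Lem:New5}) together with the algebraic inversion of the upper bounds on $k^{\langle r\rangle}$ to recover $s$. One minor remark: your line ``$r = 2\log k^{\langle r\rangle}/\log 3$'' tacitly uses the inequality $k^{\langle r\rangle}\le(\sqrt3)^r$ as an equality, so strictly speaking you get $s\le r$; this is harmless because the function $r\mapsto \sqrt[r]{\log n(\log\log n)^{r-1}}$ is decreasing for large $n$, so the stated bound with $s$ in place of $r$ is still valid (and is absorbed into the big-$O$ anyway).
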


\section{Conclusion}
\label{Sec:conclusion}

In this paper, we showed that every Mersenne number which is the
product of two primes can be used to improve the query complexity  by a factor of 3/4
in Efremenko's framework for constructing LDCs. Based on the 50 elements in
$\mathbb{M}_{2,{\rm Mersenne}}$ we discovered, a new family of
query-efficient LDCs of subexponential length with better
performance than those of \citet{Efremenko:2009} and \citet{ItohSuzuki:2010}
were obtained. Applying our new
LDCs to the construction of PIR schemes, we obtained a new family of PIR
schemes, which  are also more efficient than those of
\citet{Efremenko:2009} and \citet{ItohSuzuki:2010}. It is an interesting open problem to
determine whether $|\mathbb{M}_{2,{\rm Mersenne}}|=\infty$.
Furthermore, identifying new elements in $\mathbb{M}_{2,{\rm
Mersenne}}$ can improve our results and is also of interest on its
own right.

\begin{acknowledge}
The authors are grateful to Oded Goldreich for valuable suggestions that helped
improve the presentation of the paper.
The authors also thank Joachim von zur Gathen and the anonymous referee
for helpful comments.

Research of Y. M. Chee, S. Ling, and H. Wang is supported in part by the National Research
Foundation of Singapore under Research Grant NRF-CRP2-2007-03.
\end{acknowledge}

\bibliography{/Users/ymchee/Documents/Bibliography/mybibliography}

\end{document}